\theoremstyle{plain}
\newtheorem{theorem}{Theorem}[section]
\newtheorem{corollary}[theorem]{Corollary}
\theoremstyle{definition}
\newtheorem{definition}[theorem]{Definition}
\theoremstyle{remark}
\DeclareMathOperator*{\argmax}{arg\,max}
\newcommand{\runningtitle}{Certified Robustness to Clean-Label Poisoning Using Diffusion Denoising}
\newcommand{\maintitle}{%
    Certified Robustness to Clean-Label Poisoning Using Diffusion Denoising}
\icmltitlerunning{\runningtitle{}}
\begin{document}

\twocolumn[
\icmltitle{\maintitle{}}

\icmlsetsymbol{equal}{*}

\begin{icmlauthorlist}
\icmlauthor{Sanghyun Hong}{osu}
\icmlauthor{Nicholas Carlini}{google}
\icmlauthor{Alexey Kurakin}{google}
\end{icmlauthorlist}

\icmlaffiliation{osu}{Oregon State University, Corvallis OR, USA}
\icmlaffiliation{google}{Google DeepMind, USA}

\icmlcorrespondingauthor{Sanghyun Hong}{sanghyun.hong@oregonstate.edu}

\icmlkeywords{Machine Learning, ICML}

\vskip 0.3in
]

\begin{abstract}
We present a certified defense
to clean-label poisoning attacks
under the $\ell_2$-norm.
These attacks work by injecting 
a small number of poisoning samples (e.g., 1\%)
that contain %
bounded adversarial perturbations 
into the training data 
to induce a targeted misclassification of a test input.
Inspired by the adversarial robustness
achieved by \emph{randomized smoothing},
we show how an %
off-the-shelf diffusion denoising model
can sanitize the tampered training data. %
We extensively test our defense 
against seven clean-label poisoning attacks
in both $\ell_2$ and $\ell_{\infty}$-norms
and reduce their attack success to 0--16\%
with only a %
negligible drop in the test accuracy.
We compare our defense
with existing countermeasures against clean-label poisoning,
showing that the defense reduces the attack success the most
and offers the best model utility.
Our results highlight the need for future work
on developing stronger clean-label attacks
and using our certified yet practical defense
as a strong baseline to evaluate these attacks. 
\end{abstract}

\section{Introduction}
\label{sec:intro}

A common practice in machine learning 
is to train models on a large corpus of data.
This paradigm empowers many over-parameterized deep-learning models
but also makes it challenging to retain the quality of data %
collected from various %
sources.
This makes deep-learning-enabled systems
vulnerable to \emph{data poisoning}~\citep{Antidote:09, SpamFilter:08, PoisonSVM:12, PoisonRegression:17, Frogs, PoisonSemiSupervise:21}%
---where an adversary can %
alter a victim model's behaviors
by injecting %
malicious samples (i.e., poisoning samples) into the training data.

But in practice, it may be challenging to make modifications to
entire training data used for %
supervised classification
and may be easy to detect label corruptions.
As a result, recent work has developed
\emph{clean-label poisoning attacks}~\citep{Frogs, CP, BP, LC, HT, MetaPoison, WitchesBrew},
where the attacker aims to control the victim model's behavior
on a specific test input by injecting a few poisons
that visually appear to be correctly-labeled,
but in fact include human-invisible adversarial perturbations.

We present a certified defense against %
clean-label poisoning attacks in $\ell_2$-norm
inspired by defenses to %
adversarial examples.
Existing poisoning defenses fall into two categories
\emph{certified} and \emph{heuristic}.
Certified defenses offer provable guarantees,
but often significantly decrease
the utility of defended models at inference,
making them impractical~\citep{DP:19, DPA, DPA2, BagFlip}.
Heuristic approaches~\citep{tRONI:18, DeepKNN:20, DP:20, AP:21, FrieNDs:22}
demonstrate their effectiveness 
against existing attacks in realistic scenarios. 
However, these defenses rely on unrealistic assumptions, 
such as the defender knowing the target test input~\citep{tRONI:18}, 
or are evaluated against specific poisoning adversaries, 
leaving them %
ineffective against adaptive attacks or future adversaries.

\textbf{Our contributions.}
\emph{First}, we make two seemingly distant goals closer:
presenting 
a certified defense against clean-label poisoning
that \emph{also} minimizes the decrease in clean accuracy.
For $\ell_2$-norm bounded adversarial perturbations to the training data,
we either achieve a certified accuracy higher than %
existing ones or require less computations to attain a comparable certified accuracy.
The model trained on the tampered data
classifies a subset of test input $x$ 
(or $x+\delta$, where $||\delta||_{\ell_2}$ 
in clean-label backdoor poisoning) correctly.
To achieve this goal,
we leverage the recent denoising diffusion probabilistic models%
~\citep{sohl2015deep, ho2020denoising, nichol2021improved}.
We use an off-the-shelf diffusion model to %
smooth the entire data before training a model
and provide certified predictions.

Removing adversarial perturbations in the training data
before model training, we can decouple 
the certification process from the training algorithm.
\emph{Second}, we leverage this computational property %
our defense has
and present a series of training techniques 
to alleviate the side-effects of employing certified defenses,
i.e., the utility loss of a defended model.
To our knowledge,
we are the first %
certified defense that decouples these two processes.
We minimize the utility loss by employing the warm-starting~\citep{ash2020warm}.
We train a model a few epochs on the tampered data or 
initialize its model parameters using a pre-trained model.
None of them alters the training process;
thus, our defense's provable guarantee holds.
In \S\ref{sec:evaluation},
we show they attribute to a better performance
compared to standard training.

\emph{Third}, we extensively evaluate our defense 
against seven clean-label poisoning attacks studied in two different scenarios:
transfer-learning and training a model from-scratch.
In transfer-learning, 
our defense completely renders the attack ineffective
with a negligible accuracy drop of 0.5\% 
in the certified radius of 0.1 in $\ell_2$-norm perturbations.
We also reduce the attack success to 2--16\%
when a defender trains a model from scratch.
We further compare our defense
with five poisoning defenses in the prior work.
We demonstrate more (or the same in a few cases)
reduction in the attack success 
and less accuracy drop than those existing defenses.

\section{Preliminaries on Clean-label Poisoning}
\label{sec:prelim}

\noindent
\textbf{Threat model.}
A clean-label poisoning attacker causes %
a misclassification of a specific \emph{target} test sample (${x}_{t}$, $y_t$) 
by compromising the training data $D$ with %
poisoning samples $D_{p}$. 
If a victim trains a model $f$ on the poisoned training data $D'\!=\!D \cup D_{p}$,
the resulting model $f^{*}_{\theta}$ is likely to misclassify 
the target instance to the adversarial class $y_{adv}$ 
while preserving the classification behavior of $f^{*}_{\theta}$ %
on the clean test-set $S$.
The attacker crafts those \emph{poisons} $({x}_{p}, y_p)$ by first taking a few \emph{base} samples in the same domain (${x}_{b}, y_b$) and then adding human-imperceptible perturbations $\delta$, carefully crafted by the attacker and also bound to $||\delta||_{\ell_{p}} \leq \epsilon$, to them while keeping their labels %
\emph{clean} ($y_b=y_p$).

\noindent
\textbf{Poisoning as a constrained bi-level optimization.}
The process of crafting optimal poisoning samples $D^{*}_p$
can be formulated as the constrained bi-level optimization problem:
\begin{align*}
    D^{*}_p = \underset{D_p}{\arg\min}\text{ }
    \mathcal{L}_{adv}(x_t, y_t; f^{*}_{\theta}),
\end{align*}
where $\mathcal{L}_{adv}(x_t, y_{adv}; f^{*}_{\theta})$ is the adversarial loss function quantifying how accurately a model $f^{*}_{\theta}$, trained on the compromised training data, misclassifies a target sample $x_t$ into the class an adversary wants $y_{adv}$. $D_p$ is the set of poisoning samples we craft, and $D^{*}_p$ is the resulting optimal poisons.

While minimizing the crafting objective, %
the attacker also trains a model $f^{*}_{\theta}$ on the tampered training data, 
which is itself another optimization problem, formulated as follows:
\begin{align*}
    f^{*}_{\theta} = \underset{\theta}{\arg\min}\text{ }
    \mathcal{L}_{tr}(D_{ptr}, S; \theta),
\end{align*}
where the typical choice of $\mathcal{L}_{tr}$ is the cross-entropy loss, and $S$ is the clean test-set. Combining both the equations becomes a bi-level optimization: find $D_p^*$ such that $\mathcal{L}_{adv}$ is minimized after training, while minimizing $\mathcal{L}_{tr}$ as well.

To make the attack inconspicuous,
the attacker \emph{constraints} this bi-level optimization
by limiting the %
perturbation $\delta = x_p - x_b$
each poisoning sample can contain to $||\delta||_{\ell_p} < \epsilon$.

\noindent
\textbf{Existing clean-label attacks.}
Initial work~\citep{Frogs, CP, BP} minimizes $\mathcal{L}_{adv}$ by crafting %
poisons that are close to the target in the latent representation space $g(\cdot)$. A typical choice of $g(\cdot)$ is the activation outputs from the penultimate layer of a pre-trained model $f(\cdot)$. 
The attacks have shown effective in \emph{transfer-learning} scenarios, where $f(\cdot)$ will be fine-tuned on the tampered training data. The attacker chooses base samples from the target class $(x_b, y_{adv})$ and craft poisons $(x_p, y_{adv})$. During fine-tuning, $f(\cdot)$ learns to correctly classify poisons in the target's proximity in the latent representation space %
and classify the target into the class $y_{adv}$ the attacker wants.

Recent work %
make those poisoning attacks effective 
when $f$ is trained \emph{from scratch} on %
$D'$. 
To do so, 
the attacker requires to approximate the gradients computed on $\mathcal{L}_{adv}$ that are likely to appear in any models. \citet{MetaPoison} %
employs meta-learning; the poison-crafting process simulates all the possible initialization, intermediate models, and adversarial losses computed on those intermediate models. %
\citet{WitchesBrew} reduces this computational overhead %
by proposing %
gradient matching, that aligns the gradients from poisoning samples with those computed on a target.

\noindent
\textbf{Defenses against clean-label poisoning.} 
Early work on poisoning defenses~\cite{RONI, tRONI:18} focuses on filtering out poisons from the training data.
Follow-up work~\citep{Antidote:09, DeepKNN:20, Spectral} leverages unique statistical properties (\textit{e.g.}, spectral signatures) that can distinguish poisons from the clean data. All these approaches depend on the data they use to compute the properties. Recent work~\citep{AP:21, DP-InstaHide:21, DP:20, FrieNDs:22} thus reduces the dependency by proposing data- or model-agnostic defenses, adapting robust training or differentially-private training. %
While shown effective against existing attacks, those defenses are not \emph{provable}.

A separate line of work %
proposes
\emph{certified} defenses~\citep{DPA, DPA2, RAB}, which guarantee the correct classification of test samples when an adversary compromises the training data. %
\citet{DPA, DPA2}'s approach %
splits the training data into multiple disjoint subsets, trains models on them, and runs majority voting of test inputs over these models. 
BagFlip~\cite{BagFlip} proposes 
a model-agnostic defense %
based on randomized smoothing~\cite{cohen2019certified}, studied in adversarial robustness.
But, 
the computational demands hinder their deployment in practice.

\section{Diffusion Denoising as Certified Defense}
\label{sec:certified-defense}

\subsection{Robustness Guarantee}
\label{subsec:certification-proof}

We formally show that our defense provides 
a provable robustness guarantee against clean-label poisoning attacks,
particularly under $\ell_2$-norm perturbations.
Due to space constraints,
we only include important theorems and corollaries here
and provide the full details in Appendix~\ref{appendix:certification-proof}.

\citet{cohen2019certified} shows that 
a smoothed classifier $g$, constructed via \emph{randomized smoothing},
is robust around a test input $x$ within a certified radius of $R$.
The following theorem from the original study 
establishes this connection:

\begin{theorem}[Robustness guarantee by~\citet{cohen2019certified}]
\label{thm:main-cohen}
    Let $f: \mathbb{R}^{d}\rightarrow \mathcal{Y}$ be any deterministic (or random) function,
    $\varepsilon \sim \mathcal{N}(0, \sigma^2, I)$, and
    $g$ to be the smoothed classifier robust under $\varepsilon$.
    Suppose $y_A\in\mathcal{Y}$ and $p_A, p_B \in [0, 1]$ satisfy:
    \begin{align*}
        \mathbb{P}(f(x+\varepsilon)=y_A) \geq p_A > p_B
        \geq \underset{y \neq y_A}{\max}\mathbb{P}(f(x+\varepsilon)=c)
    \end{align*}
    then $g(x+\delta)=y_A$ for all $||\delta||_2 < R$, where
    \begin{align*}
        R = \frac{\sigma}{2}\big(\Phi^{-1}(p_A) - \Phi^{-1}(p_B)\big)
    \end{align*}
    and $\Phi$ is the inverse of the standard Gaussian CDF.
\end{theorem}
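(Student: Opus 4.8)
The plan is to show that the defining inequality of the smoothed classifier, namely $\mathbb{P}(f(x+\delta+\varepsilon)=y_A) > \max_{c\neq y_A}\mathbb{P}(f(x+\delta+\varepsilon)=c)$, continues to hold for every perturbation with $\|\delta\|_2 < R$; this is exactly what certifies $g(x+\delta)=y_A$. The core tool is the Neyman--Pearson lemma, which I would use to pin down the worst-case function $f$ consistent with the hypotheses at the clean point $x$. Concretely, I would compare the two Gaussian measures $\mu_X=\mathcal{N}(x,\sigma^2 I)$ and $\mu_Y=\mathcal{N}(x+\delta,\sigma^2 I)$, whose likelihood ratio $\mu_Y/\mu_X$ is monotone in the direction $\delta$, so that every level set $\{z:\mu_Y(z)/\mu_X(z)\le t\}$ is a half-space orthogonal to $\delta$.

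First I would establish the lower bound on the probability of the correct class at the shifted input. By Neyman--Pearson, among all (possibly randomized) functions satisfying $\mathbb{P}(f(x+\varepsilon)=y_A)\ge p_A$, the one minimizing $\mathbb{P}(f(x+\delta+\varepsilon)=y_A)$ assigns $y_A$ exactly on a half-space $A=\{z:\delta^{\top}z\le\beta\}$ with $\mathbb{P}(X\in A)=p_A$. Parameterizing $A$ by projecting onto the one-dimensional axis spanned by $\delta$, the mass constraint determines $\beta$ through $\Phi^{-1}(p_A)$, and a direct Gaussian computation yields $\mathbb{P}(f(x+\delta+\varepsilon)=y_A)\ge \Phi\big(\Phi^{-1}(p_A)-\|\delta\|_2/\sigma\big)$.

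Second, I would run the symmetric argument for the runner-up class: for each $c\neq y_A$ the hypothesis gives $\mathbb{P}(f(x+\varepsilon)=c)\le p_B$, and applying Neyman--Pearson in the opposite direction bounds $\mathbb{P}(f(x+\delta+\varepsilon)=c)\le \Phi\big(\Phi^{-1}(p_B)+\|\delta\|_2/\sigma\big)$. Combining the two bounds, the correct class dominates at $x+\delta$ whenever $\Phi\big(\Phi^{-1}(p_A)-\|\delta\|_2/\sigma\big) > \Phi\big(\Phi^{-1}(p_B)+\|\delta\|_2/\sigma\big)$; since $\Phi$ is strictly increasing this reduces to $\Phi^{-1}(p_A)-\Phi^{-1}(p_B) > 2\|\delta\|_2/\sigma$, i.e. $\|\delta\|_2 < \frac{\sigma}{2}\big(\Phi^{-1}(p_A)-\Phi^{-1}(p_B)\big)=R$, as claimed.

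The main obstacle I anticipate is the careful justification of the Neyman--Pearson step: one must verify that the Gaussian likelihood ratio is genuinely monotone along $\delta$ (so the optimal acceptance region really is a half-space), handle the randomized-function case via an auxiliary uniform variable, and reconcile the threshold $t$ with the prescribed probability mass $p_A$, including boundary ties where randomization on the separating hyperplane may be required. The subsequent Gaussian integrals, which collapse to one dimension once the geometry is fixed, are routine by comparison.
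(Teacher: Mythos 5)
Your proposal is correct and is essentially the same argument as the proof this paper relies on: the theorem is quoted from \citet{cohen2019certified}, and the paper never proves it itself (its Appendix~\ref{appendix:certification-proof} restates the theorem and refers the reader to Appendix~A of that original work), whose proof is exactly the Neyman--Pearson argument you outline --- worst-case classifiers realized on half-spaces orthogonal to $\delta$, giving the bounds $\Phi\big(\Phi^{-1}(p_A)-\|\delta\|_2/\sigma\big)$ and $\Phi\big(\Phi^{-1}(p_B)+\|\delta\|_2/\sigma\big)$, whose comparison yields $R=\tfrac{\sigma}{2}\big(\Phi^{-1}(p_A)-\Phi^{-1}(p_B)\big)$. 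Nothing further is needed.
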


Theorem~\ref{thm:main-cohen} does not impose specific restrictions on 
\emph{how} the smoothed classifier $g$ is constructed.
A common approach prior work employs is to 
apply additive Gaussian noise to the training data
or to the internal activations~\cite{lecuyer2019certified}
\emph{at each mini-batch during training}.
But this approach is not compatible with our setting
where we %
want to \emph{smooth the training data}.
To address this challenge, 
we adopt the approach by~\citet{RAB},
which obtains a smoothed classifier $g$
through a method that aligns with our setting:

\begin{definition}[$g$ considered in~\citet{RAB}]
\label{def:main-smoothed-classifier-rab}
    Suppose a base classifier $h$, learned from the training data $D$
    and to maximize the conditional probability distribution over labels $y$.
    Then the smoothed classifier $q$ is defined by:
    \begin{align*}
        q(y|x, D) = \mathbb{P}_{X, D}(h(x+X, D+\mathcal{D})=y),
    \end{align*}
    where $X\sim\mathbb{P}_X$ is random variables,
    and $\mathcal{D}\sim\mathbb{P}_\mathcal{D}$ denotes smoothing distributions.
    $\mathcal{D}$ is a collection of $n$ i.i.d random variables,
    composed of $\mathcal{D}^{i}$s, where each $\mathcal{D}^{i}$ 
    will be added to a training samples in $D$.
    $\mathbb{P}_X$ is the noise distribution for test inputs,
    and $\mathbb{P}_\mathcal{D}$ is the noise distribution for the training data.
    The final, smoothed classifier then assigns 
    the most likely class to an instance $x$ under $q$, so that:
    \begin{align*}
        g(x, D) = \argmax_{y} q(y|x, D)
    \end{align*}
\end{definition}

Our work considers the smoothed classifier $g$ 
defined in~\ref{def:main-smoothed-classifier-rab}.
We then establish our robustness guarantee,
based on the guarantee against backdoor attacks,
which is completely different from clean-label poisoning,
as follows:

\begin{figure*}[ht]
\centering
\vspace{-1.0em}
\begin{minipage}[t]{0.48\textwidth}
\begin{algorithm}[H]
    \caption{Noise, denoise, train, and classify}
    \label{algo:noise-denoise-classify}
\begin{algorithmic}[1]
    \STATE \textbf{fn} {\small\textsc{NDTClassify}}($f, \sigma, D, x, n$)
    \STATE \text{  } \verb!counts! $\gets \mathbf{0}$
    \STATE \text{  } \textbf{for} $i \in \{1, 2, ..., n\}$ \textbf{do}
    \STATE \text{  } \text{  } $t^{*}, \alpha_{t^{*}} \gets$ {\small\textsc{GetTimeStep}}($\sigma$)
    \STATE \text{  } \text{  } $\hat{D} \gets$ {\small\textsc{NoiseAndDenoise}}($D, \alpha_{t^{*}}; t^{*}$)
    \STATE \text{  } \text{  } $\hat{f_\theta} \gets$ {\small\textsc{Train}}($\hat{D}, f$)
    \STATE \text{  } \text{  } \verb!counts![$\hat{f_\theta}(x)$] $\gets$ \verb!counts![$\hat{f_\theta}(x)$] + 1
    \STATE \text{  } \textbf{ret} \verb!counts!
    \STATE
    \STATE \textbf{fn} {\small\textsc{GetTimeStep}}($\sigma$)
    \STATE \text{  } $t^{*} \gets$ find $t$ s.t. $\frac{1-\alpha_t}{\alpha_t} = \sigma^2$
    \STATE \text{  } \textbf{ret} $t^{*}, \alpha_{t^{*}}$
    \vspace{2.2em}      %
\end{algorithmic}
\end{algorithm}
\end{minipage}
\hfill
\begin{minipage}[t]{0.50\textwidth}
\begin{algorithm}[H]
    \caption{Predict and certify~\citep{cohen2019certified}}
    \label{algo:prediction-n-certification}
\begin{algorithmic}[1]
    \STATE \textbf{fn} \textsc{Predict}($f, \sigma,$ {\color{blue}$D$}$, x, m, \alpha$)
    \STATE \text{  } \verb!counts! $\gets$ {\color{blue}{\small\textsc{NDTClassify}($f, \sigma, D, x, m_0$)}}
    \STATE \text{  } $\hat{c_A}, \hat{c_B} \gets$ top two predictions in \verb!counts!
    \STATE \text{  } $n_A, n_b \gets$ \verb!counts![$\hat{c_A}$], \verb!counts![$\hat{c_B}$]
    \STATE \text{  } \textbf{if} {\small\textsc{BinomPVal}($n_A, n_A + n_B, 0.5 \leq \alpha$)} \textbf{ret} $\hat{c_A}$
    \STATE \text{  } \textbf{else ret} {\small\textsc{ABSTAIN}} 
    \STATE
    \STATE \textbf{fn} \textsc{Certify}($f, \sigma,$ {\color{blue}$D$}$, x, m_0, m, \alpha$)
    \STATE \text{  } \verb!counts0! $\gets$ {\color{blue}{\small\textsc{NDTClassify}($f, \sigma, D, x, m_0$)}}
    \STATE \text{  } $\hat{c_A} \gets$ top predictions in \verb!counts0!
    \STATE \text{  } \verb!counts! $\gets$ {\color{blue}{\small\textsc{NDTClassify}($f, \sigma, D, x, m$)}}
    \STATE \text{  } $p_A \gets$ {\small\textsc{LowerCfBound}(\verb!counts![$\hat{c_A}$], $m$, $1-\alpha$)}
    \STATE \text{  } \textbf{if} $p_A > 1/2$ \textbf{ret} $\hat{c_A}$ and radius $\sigma\Phi^{-1}(p_A)$
    \STATE \text{  } \textbf{else ret} {\small\textsc{ABSTAIN}}
\end{algorithmic}
\end{algorithm}
\end{minipage}
\label{lbl:pseudocode}
\vspace{-1.0em}
\end{figure*}

\begin{theorem}[Robustness guarantee by~\citet{RAB}]
\label{thm:main-rab}
    Suppose $g$ is the smoothed classifier defined above
    with smoothing distribution $Z:=(X, \mathcal{D})$ with $X$ taking values in $\mathbb{R}^{d}$
    and $\mathcal{D}$ being a collection of $n$ independent $\mathbb{R}^{d}$-valued random variables,
    $\mathcal{D}=(\mathcal{D}^1, ..., \mathcal{D}^n)$.
    Let $\Omega_x \in \mathbb{R}^d$ and 
    $\Delta := (\delta^1, \delta^2, ..., \delta^n)$ 
    for backdoor patterns $\delta^i \in \mathbb{R}^d$.
    Let $y_A \in \mathcal{Y}$ and $p_A, p_B \in [0, 1]$
    such that $y_A = g(x, D)$ and
    \begin{align*}
        q(y_A|x, D) \geq p_A > p_B \geq 
        \underset{y \neq y_A}{\max}q(y|x, D).
    \end{align*}
    If the optimal type II errors, for testing the null $Z \sim \mathbb{P}_0$
    against the alternative $Z + (\Omega_x, \Delta) \sim \mathbb{P}_1$, satisfy
    \begin{align*}
        \beta^{*}(1-p_Z; \mathbb{P}_0, \mathbb{P}_1) +
        \beta^{*}(p_B; \mathbb{P}_0, \mathbb{P}_1) > 1,
    \end{align*}
    then $y_A = \argmax_{y} q(y|x+\Omega_x, D+\Delta)$.
\end{theorem}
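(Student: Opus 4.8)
The plan is to recognize that the perturbation $(\Omega_x, \Delta)$ acts on the smoothed classifier purely as a \emph{change of measure}, and then to invoke the Neyman--Pearson lemma to control how much the class probabilities can move. First I would fix, for each label $y$, the decision region
\begin{align*}
    A_y := \{(u, v) : h(x+u, D+v) = y\} \subseteq \mathbb{R}^d \times (\mathbb{R}^d)^n,
\end{align*}
which depends only on the base classifier $h$ and the clean pair $(x, D)$, not on the attack. Writing $Z = (X, \mathcal{D})$ and letting $\mathbb{P}_0$ denote its law, the definition of $q$ in \cref{def:main-smoothed-classifier-rab} gives $q(y \mid x, D) = \mathbb{P}_0(A_y)$. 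Substituting $W = Z + (\Omega_x, \Delta)$ and using that the additive backdoor shifts both the test input and the training samples, the \emph{same} region satisfies $q(y \mid x + \Omega_x, D + \Delta) = \mathbb{P}_1(A_y)$, where $\mathbb{P}_1$ is the law of the shifted noise $Z + (\Omega_x, \Delta)$. Thus the entire effect of the attack is encoded by the pair $(\mathbb{P}_0, \mathbb{P}_1)$ appearing in the statement, and the goal reduces to showing $\mathbb{P}_1(A_{y_A}) > \mathbb{P}_1(A_y)$ for every $y \neq y_A$.

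Next I would bound the two sides separately through the optimal type~II error function $\beta^*(\cdot\,;\mathbb{P}_0,\mathbb{P}_1)$. For the problem of testing $\mathbb{P}_0$ against $\mathbb{P}_1$, the Neyman--Pearson lemma makes the likelihood-ratio regions extremal, so that $\inf\{\mathbb{P}_1(A) : \mathbb{P}_0(A) \geq p_A\} = \beta^*(1 - p_A; \mathbb{P}_0, \mathbb{P}_1)$ and $\sup\{\mathbb{P}_1(A) : \mathbb{P}_0(A) \leq p_B\} = 1 - \beta^*(p_B; \mathbb{P}_0, \mathbb{P}_1)$, the second identity following from the first by taking complements. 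Because the hypotheses give $\mathbb{P}_0(A_{y_A}) \geq p_A$ and $\mathbb{P}_0(A_y) \leq p_B$ for $y \neq y_A$, each region is feasible for the corresponding constrained problem, yielding
\begin{align*}
    \mathbb{P}_1(A_{y_A}) \geq \beta^*(1 - p_A; \mathbb{P}_0, \mathbb{P}_1),
    \qquad
    \mathbb{P}_1(A_y) \leq 1 - \beta^*(p_B; \mathbb{P}_0, \mathbb{P}_1).
\end{align*}

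Finally I would combine these: the separation $\mathbb{P}_1(A_{y_A}) > \mathbb{P}_1(A_y)$ holds as soon as the lower bound strictly exceeds the upper bound, i.e.\ whenever $\beta^*(1 - p_A; \mathbb{P}_0, \mathbb{P}_1) > 1 - \beta^*(p_B; \mathbb{P}_0, \mathbb{P}_1)$, which rearranges to exactly the assumed condition $\beta^*(1 - p_A; \mathbb{P}_0, \mathbb{P}_1) + \beta^*(p_B; \mathbb{P}_0, \mathbb{P}_1) > 1$ (reading the displayed $1 - p_Z$ in the statement as $1 - p_A$). Since this holds uniformly over $y \neq y_A$, the label $y_A$ remains the $\argmax$ of $q(\cdot \mid x + \Omega_x, D + \Delta)$, as claimed. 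I expect the main obstacle to be the measure-theoretic bookkeeping rather than any single hard inequality: justifying that the additive perturbation of the test point together with all $n$ training images collapses into one joint shift of $Z = (X, \mathcal{D})$, so that $\mathbb{P}_1$ is genuinely the law of $Z + (\Omega_x, \Delta)$, and checking that the extremal characterization of $\beta^*$ applies at possibly atomic thresholds (where randomized tests are needed) with the inequalities oriented correctly. Specializing back to \cref{thm:main-cohen}, where $\mathbb{P}_0$ and $\mathbb{P}_1$ are isotropic Gaussians differing by a mean shift and $\beta^*$ is computable in closed form through $\Phi$, would serve as a sanity check on the signs and the resulting radius.
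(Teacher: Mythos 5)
Your proof is correct, but note what it is being compared against: the paper never proves Theorem~\ref{thm:main-rab} itself. The theorem is quoted from \citet{RAB}, and Appendix~\ref{appendix:certification-proof} explicitly defers its proof to Appendix~A.1 of that reference (the paper's own contribution is only the corollary obtained by setting $X \equiv 0$). So the comparison is really with the cited original, and your argument is essentially that argument: recast the joint perturbation $(\Omega_x, \Delta)$ as a change of measure from $\mathbb{P}_0$ (law of $Z$) to $\mathbb{P}_1$ (law of $Z + (\Omega_x, \Delta)$) evaluated on attack-independent decision regions $A_y$, then use Neyman--Pearson feasibility to get $\mathbb{P}_1(A_{y_A}) \geq \beta^*(1-p_A;\mathbb{P}_0,\mathbb{P}_1)$ and $\mathbb{P}_1(A_y) \leq 1 - \beta^*(p_B;\mathbb{P}_0,\mathbb{P}_1)$ for $y \neq y_A$, and conclude from the assumed inequality. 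Two details you handled correctly are worth making explicit. First, the $p_Z$ in the displayed condition is indeed a typo carried over from the source and should read $p_A$; your reading is the only one under which the statement is the RAB theorem. Second, your two ``extremal characterization'' identities need randomized tests to hold with equality at atomic thresholds, but your proof never uses the equalities --- it only uses that $A_{y_A}^c$ and $A_y$ are feasible (deterministic) rejection regions for the respective testing problems, which yields exactly the two one-sided bounds required; so the measure-theoretic caveat you flag as a potential obstacle is harmless. The one step deserving a slightly more careful sentence is the identification $q(y\,|\,x+\Omega_x, D+\Delta) = \mathbb{P}_1(A_y)$: it relies on $h$ being a fixed (deterministic or randomness-independent) map of the pair (test input, training set), so that the same region $A_y$ is measured under both laws; this matches Definition~\ref{def:main-smoothed-classifier-rab}, where the training procedure is folded into $h$.
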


Theorem~\ref{thm:main-rab} establishes that the smoothed classifier $q$,
as defined in~\ref{def:main-smoothed-classifier-rab},
trained on data containing perturbations $\Delta$,
is robust around a test input $x$
with the perturbation $\Omega_x$.

\begin{corollary}[Certified radius by~\citet{RAB}]
\label{coll:main-l2-rab}
    Let $\Delta = (\delta_1, \delta_2, ..., \delta_n$),
    $\Omega_x$ be $\mathbb{R}^d$-valued backdoor patterns,
    and $D$ be a training dataset.
    Suppose that for each $i$,
    the smoothing noise on the training features is 
    $\mathcal{D}^i \stackrel{iid}\sim \mathcal{N}(0, \sigma^2 \mathbbm{1}_d)$.
    Let $y_A \in \mathcal{Y}$ such that $y_A=q(x+\Omega_x, D+\Delta)$
    with class probabilities satisfying:
    \begin{align*}
        q(y_A|x+\Omega_x, D+\Delta) & \geq p_A \\
        & > p_B \geq \underset{y \neq y_A}{\max} q(y|x+\Omega_x, D+\Delta)
    \end{align*}
    If the perturbations on the training samples are bounded by
    \begin{align*}
        \sqrt{\sum_{i=1}^{n} ||\delta_i||_2^2} <
        \frac{\sigma}{2}\big(\Phi^{-1}(p_A) - \Phi^{-1}(p_B)\big),
    \end{align*}
    it is guaranteed $y_A = q(x+\Omega_x, D) = q(x+\Omega_x, D+\Delta)$.
\end{corollary}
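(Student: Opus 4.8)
The plan is to treat this corollary as the Gaussian specialization of Theorem~\ref{thm:main-rab}: the general theorem reduces the robustness question to a condition on the optimal Type~II errors $\beta^*$ of a hypothesis test between the clean smoothing distribution $\mathbb{P}_0$ and the shifted one $\mathbb{P}_1$, and for isotropic Gaussian smoothing these errors have a closed form via the Neyman--Pearson lemma. First I would fix the (perturbed) test input $x+\Omega_x$ and observe that, since it appears identically on both sides of the claimed equality $q(x+\Omega_x,D)=q(x+\Omega_x,D+\Delta)$, the test-input noise $X$ contributes identically and cancels, so the only perturbation entering the test is the training shift $\Delta=(\delta_1,\dots,\delta_n)$ acting on the training noise $\mathcal{D}=(\mathcal{D}^1,\dots,\mathcal{D}^n)$. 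Thus the relevant test is $\mathbb{P}_0=\prod_{i=1}^n\mathcal{N}(0,\sigma^2\mathbbm{1}_d)$ against $\mathbb{P}_1=\mathbb{P}_0$ shifted by $\Delta$.

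The key computation is to evaluate $\beta^*$ for this test. Because a product of i.i.d.\ isotropic Gaussians is again an isotropic Gaussian on $\mathbb{R}^{nd}$, and such a law is rotationally invariant, the likelihood-ratio test depends only on the component along the shift direction. I would therefore reduce to the one-dimensional problem of testing $\mathcal{N}(0,1)$ against $\mathcal{N}(r,1)$ with $r=\lVert\Delta\rVert_2/\sigma=\sqrt{\sum_i\lVert\delta_i\rVert_2^2}\,/\sigma$, for which the optimal test rejects above a threshold and yields $\beta^*(\alpha)=\Phi\!\left(\Phi^{-1}(1-\alpha)-r\right)$.

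Next I would substitute this formula into the theorem's condition $\beta^*(1-p_A)+\beta^*(p_B)>1$, giving $\Phi(\Phi^{-1}(p_A)-r)+\Phi(\Phi^{-1}(1-p_B)-r)>1$. Applying $\Phi^{-1}(1-p_B)=-\Phi^{-1}(p_B)$ and $\Phi(-z)=1-\Phi(z)$ cancels the constant term and collapses the inequality to $\Phi^{-1}(p_A)-r>\Phi^{-1}(p_B)+r$; since $\Phi$ is strictly increasing, this is exactly $r<\tfrac12\bigl(\Phi^{-1}(p_A)-\Phi^{-1}(p_B)\bigr)$, i.e.\ $\sqrt{\sum_i\lVert\delta_i\rVert_2^2}<\tfrac{\sigma}{2}\bigl(\Phi^{-1}(p_A)-\Phi^{-1}(p_B)\bigr)$, the stated bound. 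Invoking Theorem~\ref{thm:main-rab} under this bound then certifies that the top class at $(x+\Omega_x,D+\Delta)$ is preserved when the training shift is removed; since the Gaussian test is symmetric under $\Delta\mapsto-\Delta$ (both shifts have $\ell_2$-norm $\sqrt{\sum_i\lVert\delta_i\rVert_2^2}$), both directions hold and we obtain $q(x+\Omega_x,D)=q(x+\Omega_x,D+\Delta)$.

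The main obstacle I anticipate is not the algebra but pinning down the hypothesis-testing setup: I must argue carefully that the fixed test input contributes identically on both sides so that only $\Delta$ is tested, and that the $n$ independent per-sample noises combine into a single Gaussian whose relevant scale is the joint norm $\sqrt{\sum_i\lVert\delta_i\rVert_2^2}$ rather than any per-coordinate or per-sample quantity. Getting the Neyman--Pearson reduction and the sign conventions in $\beta^*$ exactly right is what makes the final $\ell_2$ bound come out clean.
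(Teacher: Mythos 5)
Your derivation is correct, and it is essentially the proof the paper relies on: the paper never proves Corollary~\ref{coll:main-l2-rab} itself---both in \S\ref{subsec:certification-proof} and in Appendix~\ref{appendix:certification-proof} it is quoted as a result of \citet{RAB}, with readers referred to that original study---and the argument there is precisely your Neyman--Pearson reduction. Concretely, your steps all check out: factoring out the unshifted test-input noise so that only the training shift $\Delta$ enters the test; collapsing the product of $n$ isotropic Gaussians to the one-dimensional test of $\mathcal{N}(0,1)$ against $\mathcal{N}(r,1)$ with $r=\sqrt{\sum_i\|\delta_i\|_2^2}\,/\sigma$, whence $\beta^{*}(\alpha)=\Phi\big(\Phi^{-1}(1-\alpha)-r\big)$; and simplifying $\beta^{*}(1-p_A)+\beta^{*}(p_B)>1$ to $r<\tfrac{1}{2}\big(\Phi^{-1}(p_A)-\Phi^{-1}(p_B)\big)$ (which also implicitly corrects the paper's typo $p_Z$ for $p_A$ in Theorem~\ref{thm:main-rab}). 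You likewise handle the one genuine subtlety correctly: the corollary's class-probability premises are stated at the poisoned pair $(x+\Omega_x, D+\Delta)$ whereas the theorem's premises sit at the unperturbed pair, and invoking the $\Delta\mapsto-\Delta$ symmetry of the Gaussian shift test is the right way to bridge that gap.
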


Now we can simplify Corollary~\ref{coll:main-l2-rab}
by assuming an adversary compromises $r \leq n$ training samples
with the largest noise pattern's $\ell_2$-norm bound, $||\delta||_2$:
\begin{align*}
    ||\delta||_2 < \frac{\sigma}{2\sqrt{r}}
        (\Phi^{-1}(p_A) - \Phi^{-1}(p_B)\big),
\end{align*}

A typical choice of $r$ in clean-label poisoning is $\sim$10\%.
But when computing our certificate bound in \S\ref{sec:certified-defense},
we consider the worst-case adversary 
capable of compromising the entire training set.
In this case, $r\!=\!n$.
Under this formulation, 
we observe that against the strongest adversary 
who can compromise the entire training data $n$,
the robustness bound is much smaller compared to
that against the realistic attacker capable of 
compromising only 10\% of the training data.

Our certification process trains $m$ models
that return the prediction of $x$, the target test sample, as $y_A$.
Then the lower bound of $p_A$ becomes $\alpha^{1/m}$,
with the probability of at least $1-\alpha$.
Plugging this into the above equation yields:
\begin{align*}
    ||\delta||_2 < \frac{\sigma}{\sqrt{m}} (\Phi^{-1}(p_A)\big)
\end{align*}

\begin{corollary}[Our robustness guarantee against clean-label poisoning.]
    Consider the smoothed classifier $q$ from Theorem~\ref{thm:main-rab},
    with no test input perturbation $X$ and 
    the smoothing mechanism being diffusion denoising,
    which applies a Gaussian process to add and remove noise from the input.
    Under this setting, it provides the same robustness guarantee 
    for clean-label poisoning attacks.
\end{corollary}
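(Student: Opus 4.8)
The plan is to derive the corollary as a specialization of Theorem~\ref{thm:main-rab} and its $\ell_2$ instantiation Corollary~\ref{coll:main-l2-rab}, via two reductions: first removing the test-time perturbation, and second identifying diffusion denoising as a concrete Gaussian smoothing mechanism that satisfies the hypotheses of those results.

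First I would set $\Omega_x = 0$ throughout. The RAB guarantee is stated for a combined smoothing variable $Z = (X, \mathcal{D})$ and a combined shift $(\Omega_x, \Delta)$, whereas the clean-label threat model of Section~\ref{sec:prelim} perturbs only the training data $D \mapsto D + \Delta$ and evaluates on an unaltered target $x_t$. So I take $X$ degenerate (no test-input noise) and $\Omega_x = 0$, which collapses the Neyman--Pearson test of the null $Z \sim \mathbb{P}_0$ against the alternative $Z + (\Omega_x, \Delta) \sim \mathbb{P}_1$ to a test that separates only the training-data coordinates. Under this reduction the conclusion $y_A = \argmax_{y} q(y \mid x + \Omega_x, D + \Delta)$ becomes $y_A = \argmax_{y} q(y \mid x, D + \Delta)$, i.e. the prediction on the clean test input is unchanged by the training-data perturbation, which is exactly the property clean-label poisoning aims to violate.

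Second I would verify that noise-then-denoise realizes the per-sample Gaussian smoothing $\mathcal{D}^i \stackrel{iid}\sim \mathcal{N}(0, \sigma^2 \mathbbm{1}_d)$ demanded by Corollary~\ref{coll:main-l2-rab}. The forward diffusion step produces $x_{t} = \sqrt{\alpha_{t}}\,x_0 + \sqrt{1-\alpha_{t}}\,\varepsilon$ with $\varepsilon \sim \mathcal{N}(0, \mathbbm{1}_d)$; factoring out $\sqrt{\alpha_t}$ gives $x_t = \sqrt{\alpha_t}\,\big(x_0 + \sqrt{(1-\alpha_t)/\alpha_t}\,\varepsilon\big)$, and the choice $\frac{1-\alpha_{t^{*}}}{\alpha_{t^{*}}} = \sigma^2$ made by \textsc{GetTimeStep} turns the parenthesized term into $x_0 + \sigma\varepsilon$, i.e. precisely additive isotropic Gaussian noise at level $\sigma$. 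Because Theorem~\ref{thm:main-cohen} and Theorem~\ref{thm:main-rab} allow the base function to be an arbitrary deterministic or random map, I would fold the diffusion denoiser together with \textsc{Train} and the classifier into a single base classifier $h$ acting on the noised inputs, absorbing the deterministic rescaling $1/\sqrt{\alpha_t}$ into $h$. The smoothing distribution seen by $h$ is then exactly the product of independent $\mathcal{N}(0, \sigma^2 \mathbbm{1}_d)$ laws, so the hypotheses of Corollary~\ref{coll:main-l2-rab} hold verbatim.

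Combining the two reductions, the certified condition $\sqrt{\sum_{i} \|\delta_i\|_2^2} < \tfrac{\sigma}{2}\big(\Phi^{-1}(p_A) - \Phi^{-1}(p_B)\big)$ carries over with $\Omega_x = 0$, and its single-sample simplification $\|\delta\|_2 < \tfrac{\sigma}{2\sqrt{r}}\big(\Phi^{-1}(p_A) - \Phi^{-1}(p_B)\big)$ (with $r = n$ for the worst-case adversary) is inherited unchanged, establishing the claim. The main obstacle is the second reduction: I must argue carefully that treating the denoiser as part of the base classifier does not break the i.i.d. Gaussian structure the RAB bound relies on---in particular that the rescaling by $1/\sqrt{\alpha_t}$ is a deterministic, label-preserving reparametrization that is legitimately absorbed into $h$, and that denoising each training sample with an independent noise draw keeps the $\mathcal{D}^i$ mutually independent. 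Once that identification is justified, the guarantee transfers with no change to the radius, which is precisely what the corollary asserts.
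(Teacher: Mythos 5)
Your core reduction is exactly the paper's proof: set $X \equiv 0$ (hence $\Omega_x = 0$) in Theorem~\ref{thm:main-rab}, so the backdoor guarantee specializes to clean-label poisoning with the same $\ell_2$ bound on the training-set perturbation $\Delta$; on that point you and the paper coincide, and both are correct. What you add is a verification the paper skips entirely: its proof simply takes for granted that diffusion denoising is a valid Gaussian smoothing mechanism, whereas you check that the \textsc{GetTimeStep} calibration $(1-\alpha_{t^*})/\alpha_{t^*} = \sigma^2$ lets you write $x_{t^*} = \sqrt{\alpha_{t^*}}\,\big(x_0 + \sigma\varepsilon\big)$ with $\varepsilon \sim \mathcal{N}(0,\mathbbm{1}_d)$, and that the deterministic rescaling, the denoiser, and \textsc{Train} can all be absorbed into the base classifier $h$ --- legitimate because Theorem~\ref{thm:main-rab} places no restriction on $h$ beyond being a (possibly random) function --- so the smoothing distribution seen by $h$ is genuinely $\mathcal{D}^i \stackrel{iid}{\sim} \mathcal{N}(0,\sigma^2\mathbbm{1}_d)$ as Corollary~\ref{coll:main-l2-rab} requires. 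This identification (the denoised-smoothing argument of \citet{carlini2022certified}) is the only nontrivial content of the corollary, so your proposal is the more complete of the two proofs; the per-sample independence point you flag as the main obstacle is indeed the thing to check, and it holds because the forward-diffusion noise is drawn independently for each training sample.
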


\begin{proof}
    By setting $X \equiv 0$ in Theorem~\ref{thm:main-rab} (and therefore $\Omega_x = 0$),
    the smoothed classifier will output the same label prediction $y_A$
    within the same $\ell_2$-bound for the perturbations applied to the training data of size $n$.
\end{proof}

\subsection{(Diffusion) Denoising for the Robustness}
\label{subsec:diffusion-denoising}

\noindent 
\textbf{Denoising diffusion probabilistic model (DDPMs)}
are a recent generative model
that works by learning the diffusion process of the form
$x_t \sim \sqrt{1 - \beta_t} \cdot x_{t-1} + \beta_t \cdot \omega_t$,
where $\omega_t$ is drawn from a standard normal distribution $\mathcal{N}(0, \mathbf{I})$
with $x_0$ sourced from the actual data distribution,
and $\beta_t$ being fixed (or learned) variance parameters.
This process transforms images
from the target data distribution into purely random noise over time $t$,
and the reverse \emph{denoising} process constructs images in the data distribution,
starting with random Gaussian noise.
A DDPM with a fixed time-step $t \in \mathbb{N}^{+}$ and a fixed schedule
samples a noisy version of a training image $x_t \in [-1, 1]^{w \cdot h \cdot c}$ of the form:
\begin{align*}
    x_t := \sqrt{\alpha_t} \cdot x + \sqrt{1 - \alpha_t} \cdot \mathcal{N}(0, \mathbf{I}),
\end{align*}
where $\alpha_t$ is a constant derived from $t$,
which decides the level of noise to be added to the image 
(the noise increases consistently as $t$ grows).
During training,
the model minimizes the difference between 
$x$ and the denoised %
$x_t$,
where $x_t$ is obtained by applying the noise at time-step $t$.

\noindent
\textbf{Diffusion denoising for the robustness.}
We utilize off-the-shelf
DDPMs~\citep{sohl2015deep, ho2020denoising, nichol2021improved}
to \emph{denoise} adversarial perturbations added to the training data
and as a result, provide the robustness to clean-label poisoning.
A naive adaptation of randomized smoothing to our scenario 
is to train multiple models on the training data 
with Gaussian noise augmentation or using adversarial training.
But we need to add noise to the data 
that does not make it look natural 
or they are computationally demanding.
We thus avoid using these approaches to train models, 
instead we want to remove the (potentially compromised) training set before training. %
We use a single-step denoising,
as shown in Pseudocode~\ref{algo:noise-denoise-classify}.

\begin{figure*}[ht]
\centering
\vspace{-0.4em}      %
\begin{minipage}{0.49\linewidth}
    \centering
    \includegraphics[width=0.8\linewidth]{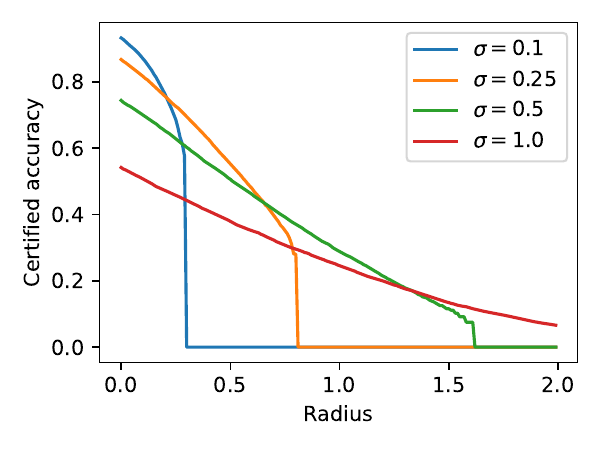}
\end{minipage}
\hfill
\begin{minipage}{0.49\linewidth}
    \centering
    \includegraphics[width=0.8\linewidth]{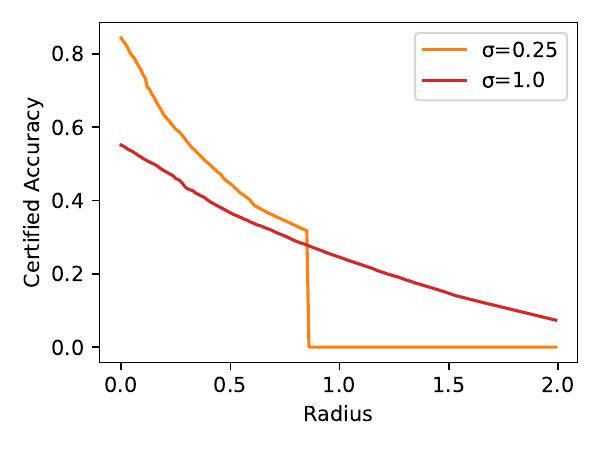}
\end{minipage}
\vspace{-1.4em}      %
\caption{\textbf{Certified radius and accuracy} attained by \emph{denoising} the CIFAR10 training data with varying $\sigma$ values in \{0.1, 0.25, 0.5, 1.0\} (left) and by adding \emph{Gaussian random noise} to the training data
with $\sigma$ values in \{0.25, 1.0\} (right).}
\label{fig:acc-vs-radius}
\vspace{-1.0em}
\end{figure*}

\subsection{Prediction and Certification}
\label{subsec:prediction-certification}

Now we present %
algorithms 
for running a prediction with the classifiers trained on $D_{ptr}$ 
and certifying the robustness of the model’s prediction 
on a test sample $x$.
Our work extends the algorithms presented by~\citet{cohen2019certified}
to clean-label poisoning settings, 
as shown in Pseudocode~\ref{algo:prediction-n-certification}.

\noindent
\textbf{Prediction.}
To compute the prediction for a test sample $x$,
we make an adaptation of the standard randomized smoothing.
We train $m$ classifiers on $m$ different noised-then-denoised training datasets,
and return the predictions of these classifiers on the target $x$,
with the algorithm {\small\textsc{DTClassify}}($\cdot$).
We run this algorithm for a sufficient number of times (e.g., over 1000 times).
The prediction output is then the majority-voted label from these classifiers.

\noindent
\textbf{Certification}
process exactly follows the standard randomized smoothing.
We first count the number of occurrences of 
the most likely label $\hat{y_A}$ compared to any other runner-up label,
and from this can derive a radius 
(for the training set perturbations)
on which $x$ is guaranteed to be robust.

\noindent
\textbf{Results.}
We train 10k CIFAR10 models using an efficient training pipeline that
trains a single model in $15$ seconds on an A100 GPU
(for a total of 1.7 GPU days of compute).
These models have a clean accuracy of between $92\%$ and $58\%$
depending on the level of noise $\sigma$ introduced, 
and for the first time can certify non-trivial robustness to
clean-label poisoning attacks. %
A full curve comparing certified accuracy for all given perturbation radii
is shown in Figure~\ref{fig:acc-vs-radius}.
Instead of applying a diffusion model,
we test the randomized smoothing approach proposed by~\citet{RAB}:
adding Gaussian random noise 
directly to the images and train on these (noisy) images.
This achieves a similar level of certified accuracy and robustness
as our denoising approach, but at a cost of training for $20\times$ longer.

\subsection{Computational Efficiency for Certification}
\label{subsec:efficiency}

In order to obtain a robustness certificate, 
we must train many CIFAR10 models
on different denoisings of the training dataset.
This step %
is computationally expensive,
but is in line with the computational complexity of prior 
randomized smoothing based approaches that require similar work.
However, as we will show later, in practice we can empirically obtain robustness
with only a very small number of training runs (even as low as \textbf{one}!),
confirming the related observation from~\citep{lucas2023randomness} %
who find a similar effect holds for randomized smoothing of classifiers.

\section{Empirical Evaluation}
\label{sec:evaluation}

We empirically evaluate the effectiveness of our certified 
defense against clean-label poisoning. We adapt the poisoning benchmark developed by~\citet{PoisoningBenchmark}.

\noindent
\textbf{Poisoning attacks.}
We evaluate our defense against seven clean-label poisoning attacks: four targeted poisoning (Poison Frogs~\citep{Frogs}, Convex Polytope~\citep{CP}, Bullseye Polytope~\citep{BP}, Witches' Brew~\citep{WitchesBrew}) and three backdoor attacks (Label-consistent Backdoor~\citep{LC}, Hidden-Trigger Backdoor~\citep{HT}, and Sleeper Agent~\citep{SleeperAgent}). The first three %
attacks operate in the transfer-learning scenarios (i.e., they assume that the victim fine-tunes a model from some given initialization), whereas the rest assumes the victim trains a model from-scratch.

Because our guarantee is a special case of the one provided by~\cite{RAB},
the certification holds for both backdooring and clean-label poisoning in $\ell_2$.

We evaluate both the transfer-learning and training from-scratch scenarios.
We also evaluate %
against the attackers with two types of knowledge: %
white-box and black-box. The white-box attackers know the victim model's %
internals, while the black-box attackers do not. 
More details are
in Appendix~\ref{appendix:detailed-configurations}.

\noindent
\textbf{Metrics.}
We employ two metrics: clean accuracy and attack success rate. 
We compute the accuracy on the entire test set. %
An attack is successful if a poisoned model classifies a target sample 
to the intended class. When evaluating backdoor attacks, 
we only count the cases where a tampered model misclassifies 
the target sample containing a trigger pattern.

\begin{table*}[h]
\centering
\caption{\textbf{Defense effectiveness in transfer-learning scenarios (CIFAR10).} We measure the clean accuracy and attack success of the models trained on the denoised training set. Each cell shows the accuracy in the parentheses and the attack success outside. Note that $^\dagger$ indicates the runs with $\sigma$=0.0, the same as our baseline that trains models without any denoising.} %
\label{tbl:cifar10-denoise-linf16-transfer-learning-all}
\vspace{-0.6em}
\adjustbox{max width=1.0\textwidth}{
\begin{tabular}{@{}lc | rrrrr | rrrrr@{}}
    \toprule
    \multicolumn{1}{c}{} &  
    & \multicolumn{5}{c}{\textbf{Our defense against $\ell_2$ attacks at $\sigma$ (\%)}} 
    & \multicolumn{5}{|c}{\textbf{Our defense against $\ell_{\infty}$ attacks at $\sigma$ (\%)}} \\ \cmidrule(l){3-12} 
    \textbf{Poisoning attacks} & \textbf{Knowledge} 
    & $^\dagger$\textbf{0.0} & \textbf{0.1} & \textbf{0.25} & \textbf{0.5} & \textbf{1.0} 
    & $^\dagger$\textbf{0.0} & \textbf{0.1} & \textbf{0.25} & \textbf{0.5} & \textbf{1.0}
    \\ \midrule \midrule
    Poison Frog! & \multirow{5}{*}{\rotatebox[origin=c]{90}{WB}} 
    & $^{(93.6)}$99.0 & $^{(93.3)}$0.0 & $^{(91.8)}$1.0 & $^{(84.8)}$0.0 & $^{(79.9)}$1.0 
    & $^{(93.6)}$68.8 & $^{(93.3)}$0.0 & $^{(92.7)}$0.0 & $^{(90.8)}$0.0 & $^{(87.4)}$0.0 \\
    Convex Polytope &  
    & $^{(93.7)}$16.2 & $^{(93.2)}$0.0 & $^{(91.7)}$0.0 & $^{(86.6)}$0.0 & $^{(77.0)}$0.0 
    & $^{(93.7)}$12.2 & $^{(93.3)}$0.0 & $^{(92.7)}$0.0 & $^{(90.8)}$1.0 & $^{(87.5)}$0.0 \\
    Bullseye Polytope &  
    & $^{(93.5)}$100 & $^{(93.3)}$4.0 & $^{(92.6)}$0.0 & $^{(87.5)}$0.0 & $^{(79.2)}$1.0
    & $^{(93.5)}$100 & $^{(93.3)}$0.0 & $^{(92.7)}$0.0 & $^{(90.8)}$1.0 & $^{(87.5)}$0.0 \\
    Label-consistent Backdoor &  
    & \multicolumn{5}{c|}{\textbf{-}}
    & $^{(93.2)}$1.0 & $^{(93.3)}$0.0 & $^{(92.6)}$0.0 & $^{(90.8)}$1.0 & $^{(87.5)}$0.0 \\ 
    Hidden Trigger Backdoor &  
    & \multicolumn{5}{c|}{\textbf{-}}
    & $^{(93.4)}$7.0 & $^{(93.3)}$0.0 & $^{(92.6)}$0.0 & $^{(90.8)}$0.0 & $^{(87.5)}$0.0
    \\ \midrule
    Poison Frog! & \multirow{5}{*}{\rotatebox[origin=c]{90}{BB}} 
    & $^{(91.6)}$10.0 & $^{(91.2)}$0.0 & $^{(89.6)}$0.5 & $^{(82.9)}$0.0 & $^{(77.8)}$2.0
    & $^{(91.7)}$2.5 & $^{(91.3)}$0.0 & $^{(90.3)}$0.0 & $^{(88.8)}$0.5 & $^{(86.2)}$1.0 \\
    Convex Polytope &  
    & $^{(91.7)}$3.0 & $^{(91.0)}$0.0 & $^{(89.5)}$0.0 & $^{(84.6)}$0.5 & $^{(73.6)}$1.0
    & $^{(91.8)}$2.5 & $^{(91.3)}$0.0 & $^{(90.3)}$0.0 & $^{(88.8)}$0.5 & $^{(86.2)}$1.0 \\
    Bullseye Polytope &  
    & $^{(91.6)}$9.0 & $^{(91.3)}$0.0 & $^{(90.3)}$0.0 & $^{(85.5)}$0.0 & $^{(76.3)}$1.0
    & $^{(91.6)}$8.0 & $^{(91.3)}$0.0 & $^{(90.3)}$0.0 & $^{(88.8)}$0.5 & $^{(86.2)}$0.5 \\ 
    Label-consistent Backdoor &  
    & \multicolumn{5}{c|}{\textbf{-}}
    & $^{(91.5)}$1.0 & $^{(91.3)}$0.0 & $^{(90.3)}$0.0 & $^{(88.8)}$0.0 & $^{(86.2)}$1.5 \\
    Hidden Trigger Backdoor &  
    & \multicolumn{5}{c|}{\textbf{-}}
    & $^{(91.6)}$4.0 & $^{(91.2)}$1.0 & $^{(90.3)}$1.0 & $^{(89.3)}$1.5 & $^{(86.3)}$1.5 \\ \bottomrule
\end{tabular}
}
\vspace{-1.em}      %
\end{table*}

\noindent
\textbf{Methodology.}
We run our experiments in CIFAR10 and Tiny ImageNet~\citep{CIFAR10, TinyImageNet}.
Following the standard practices in the prior work~\citep{PoisoningBenchmark}, 
we run each attack 100 times on different targets and report the averaged metrics over the 100 runs.
We randomly choose the base and target classes.
We set the poisoning budget to 0.5--1\% for all our attacks:
the attacker can tamper with only 500 samples of the entire 50k training instances in CIFAR10
and 250 of the total 100k training samples in Tiny ImageNet.

\subsection{Effectiveness of Our Denoising Defense}
\label{subsec:our-defense-eval}

We first show to what extent our defense makes the seven poisoning attacks ineffective. Our defense is attack-agnostic. A defender does not know whether the training data contains poisoning samples or not; instead, the defender always trains a model on the denoised training set. We measure the trained model's accuracy and attack success and compare them with the baseline, where a model is trained directly without denoising. We evaluate with different defense strengths: $\sigma \in \{0.1, 0.25, 0.5, 1.0\}$. %
The attacks are with the perturbation bound of 16-pixels in $\ell_{\infty}$-norm and that of 3.5--6.5 in $\ell_{2}$-norm. We extended many poisoning attacks for supporting $\ell_2$ bounds. Both $\ell_{\infty}$ and $\ell_2$ bounds we use are comparable to each other. We show the conversion between the two bounds in Appendix~\ref{appendix:norm-conversion}. 
Due to the page limit, %
the results from Tiny ImageNet %
are in Appendix~\ref{appendix:additional-experiments}.

\noindent
\textbf{Transfer-learning scenarios.}
Table~\ref{tbl:cifar10-denoise-linf16-transfer-learning-all} summarizes our results. Our defense is also effective in mitigating the attacks in transfer-learning. Each cell contains the accuracy and the attack success rate of a model trained on the denoised training set. All the numbers are the averaged values over 100 attacks. Against the white-box attacks, we reduce the success rate of the targeted poisoning attacks to 0--4\% at $\sigma$ of 0.1. In the black-box attacks, the targeted poisoning is less effective (with 3--10\% success), but still, the defense reduces the success to 0--1\%. We observe that the two clean-label backdoor attacks show 1--7\% success regardless of the attacker's knowledge. We thus exclude these attacks from the subsequent evaluation. Using $\sigma$ greater than 0.5 increases the attack success by 0.5--2\%. But, this may not because of the successful poisoning attacks but due to the significant decrease in a model's accuracy (3-9\%). Increasing $\sigma$ may defeat a stronger attack but significantly decrease a model's accuracy. This result is interesting, as in our theoretical analysis shown in \S~\ref{sec:certified-defense}, our defense needs $\sigma$ of 0.25, while 0.1 is sufficient to defeat both the attacks.

\noindent
\textbf{Training from-scratch scenarios.}
Table~\ref{tbl:cifar10-denoise-linf16-scratch-all} shows our results with the same format as in Table~\ref{tbl:cifar10-denoise-linf16-transfer-learning-all}. We demonstrate that our defense mitigates both attacks at $\sigma=$ 0.25, aligned to what we've seen in our theoretical analysis. In Witches' Brew, the attack success decreases from 34--71\% to 3--10\%; against Sleeper Agent, the attack success decreases from 19--40\% to 7--19\%
Note that the success rate of typical backdoor attacks is $\sim$90\%; thus, the success below 19\% means they are ineffective.%
Note that to reduce their attack success to $\sim$10\%, our defense needs a minimum $\sigma$ of 0.25--0.5. It is also noteworthy that our defense reduces the backdooring success significantly while the certificate does not hold because test-time samples are altered with trigger patterns.

\begin{table*}[ht]
\centering
\caption{\textbf{Defense effectiveness in training from-scratch scenarios (CIFAR10).} We measure the accuracy and attack success of the models trained on the denoised training set. Each cell shows the accuracy in the parentheses and the attack success outside. Note that $^\dagger$ indicates the runs with $\sigma$=0.0, the same as our baseline that trains models without any denoising.
We use an ensemble of four models, and WB and BB stand for the white-box and the black-box attacks, respectively.}
\label{tbl:cifar10-denoise-linf16-scratch-all}
\vspace{-0.6em}
\adjustbox{max width=1.0\linewidth}{
\begin{tabular}{@{}lc | rrrrr | rrrrr@{}}
    \toprule
    \multicolumn{1}{c}{} & 
    & \multicolumn{5}{|c}{\textbf{Our defense against $\ell_2$ attacks at $\sigma$ (\%)}} 
    & \multicolumn{5}{|c}{\textbf{Our defense against $\ell_{\infty}$ attacks at $\sigma$ (\%)}} \\ \cmidrule(l){3-12} 
    \textbf{Poisoning attacks} & \textbf{Knowledge} 
    & $^\dagger$\textbf{0.0} & \textbf{0.1} & \textbf{0.25} & \textbf{0.5} & \textbf{1.0} 
    & $^\dagger$\textbf{0.0} & \textbf{0.1} & \textbf{0.25} & \textbf{0.5} & \textbf{1.0} 
    \\ \midrule \midrule
    Witches' Brew & \multirow{2}{*}{\rotatebox[origin=c]{90}{WB}} 
    & $^{(92.2)}$71.0 & $^{(86.4)}$54.0 & $^{(72.3)}$10.0 & $^{(46.7)}$11.0 & $^{(42.5)}$10.0
    & $^{(92.3)}$65.0 & $^{(86.5)}$9.0 & $^{(71.9)}$3.0 & $^{(46.0)}$9.0 & $^{(41.3)}$7.0
    \\
    Sleeper Agent Backdoor &  
    & $^{(92.4)}$40.5 & $^{(84.4)}$66.5 & $^{(71.8)}$19.5 & $^{(46.8)}$13.0 & $^{(39.9)}$10.5
    & $^{(92.4)}$35.0 & $^{(86.1)}$17.0 & $^{(73.0)}$8.0 & $^{(47.0)}$9.5 & $^{(39.7)}$10.0
    \\ \midrule
    Witches' Brew &  \multirow{2}{*}{\rotatebox[origin=c]{90}{BB}} 
    & $^{(90.1)}$45.5 & $^{(85.9)}$28.0 & $^{(75.5)}$4.0 & $^{(58.8)}$7.0 & $^{(49.0)}$10.0
    & $^{(90.0)}$33.5 & $^{(85.8)}$3.5 & $^{(75.5)}$2.5 & $^{(58.8)}$6.0 & $^{(48.7)}$6.5 \\
    Sleeper Agent Backdoor &  
    & $^{(90.0)}$39.5 & $^{(85.0)}$44.5 & $^{(75.1)}$14.5 & $^{(58.6)}$9.5 & $^{(49.0)}$8.0
    & $^{(90.0)}$18.5 & $^{(85.6)}$11.5 & $^{(75.5)}$7.0 & $^{(58.8)}$8.0 & $^{(48.4)}$8.0
    \\ \bottomrule
\end{tabular}
}
\vspace{-1.em}      %
\end{table*}

\subsection{Improving Model Utility}
\label{subsec:compare-to-existing-defense}

A shortcoming of certified defenses~\citep{DPA, DPA2, BagFlip} is that the %
robust model's accuracy is substantially lower than that of the undefended model. 
We also observe in %
\S\ref{subsec:our-defense-eval} that our defense, when we train a model from scratch on the data, denoised with large $\sigma$ values, is not free from the same issue, which hinders their deployment in practice.

In \S\ref{sec:certified-defense}, we show that our defense is agnostic to how we train a model, 
including how we \emph{initialize} a model's parameters. 
We leverage this property and minimize the utility loss by initializing model parameters in specific ways
before we train the model. %
We test two scenarios: a defender can use in-domain data or out-of-domain data to pre-train a model and use its parameters to initialize.
We evaluate %
them against the three most effective attacks %
shown in Table~\ref{tbl:cifar10-denoise-linf16-transfer-learning-all}.

\noindent
\textbf{Initializing a model using in-domain data.}
Our strategy here is an adaptation of warm-starting~\citep{ash2020warm}. We first train a model from scratch on the tampered training data for a few epochs to achieve high accuracy. It only needs 5--10 epochs in CIFAR10. We then apply our defense and continue training the model on the denoised training set. %

\begin{table}[ht]
\centering
\caption{%
\textbf{Improving the utility of defended models.} We show the accuracy of defended models and the attack success %
after employing our two strategies. %
The top three rows are the baselines from Table~\ref{tbl:cifar10-denoise-linf16-transfer-learning-all} and ~\ref{tbl:cifar10-denoise-linf16-scratch-all}, 
and the next two sets of three rows are our results. We highlight the cells showing the accuracy improvements in \textbf{bold}. ResNet18 models are used. BP, WB, and SA indicate Bullseye Polytope, Witches' Brew, and Sleeper Agent, respectively.}
\vspace{-0.6em}
\adjustbox{max width=\linewidth}{
\begin{threeparttable}
\begin{tabular}{@{}clrrrrr@{}}
    \toprule
    \multicolumn{1}{c}{\textbf{}} & \textbf{} & \multicolumn{5}{c}{\textbf{Our defense against $\ell_{\infty}$ attacks at $\sigma$ (\%)}} \\ \cmidrule(l){3-7} 
    \textbf{Att.} & \textbf{Initialization} & \textbf{$^{\dagger}$0.0} & \textbf{0.1} & \textbf{0.25} & \textbf{0.5} & \textbf{1.0} \\ \midrule \midrule
    BP & \multirow{3}{*}{\begin{tabular}[l]{@{}l@{}}N/A\\ (Baseline)\end{tabular}} 
    & $^{(93.5)}$100 & $^{(93.3)}$0.0 & $^{(92.7)}$0.0 & $^{(90.8)}$1.0 & $^{(87.6)}$0.0 \\ %
    WB &  & $^{(92.3)}$65.0 & $^{(86.5)}$9.0 & $^{(71.9)}$3.0 & $^{(46.0)}$9.0 & $^{(41.3)}$7.0 \\
    SA &  & $^{(92.4)}$34.0 & $^{(86.2)}$18.5 & $^{(73.0)}$7.9 & $^{(47.2)}$8.2 & $^{(40.1)}$11.5 \\ \midrule
    BP & \multirow{3}{*}{\begin{tabular}[l]{@{}l@{}}In-domain\\ (CIFAR10)\end{tabular}} & $^{(93.5)}$100 & $^{(93.3)}$0.0 & $^{(92.6)}$1.0 & $^{(\mathbf{90.9})}$1.0 & $^{(\mathbf{87.7})}$2.0 \\
    WB &  & $^{(92.3)}$65.0 & $^{(86.5)}$9.0 & $^{(\mathbf{72.7})}$5.0 & $^{(\mathbf{47.2})}$7.0 & $^{(38.6)}$8.0 \\ %
    SA &  & $^{(92.4)}$34.0 & $^{(86.2)}$18.1 & $^{(\mathbf{73.4})}$8.5 & $^{(\mathbf{47.8})}$9.9 & $^{(36.9)}$11.2 \\ \midrule
    BP & \multirow{3}{*}{\begin{tabular}[l]{@{}l@{}}$^\dagger$Out-of-domain\\ (ImageNet-21k)\end{tabular}} & 
                             $^{(85.2)}$4.0 & $^{(71.1)}$7.0 & $^{(66.4)}$5.0 & $^{(58.6)}$5.0 & $^{(48.7)}$6.0 \\  %
    WB &  & $^{(86.6)}$14.0 & $^{(84.1)}$4.0 & $^{(\mathbf{79.0})}$0.0 & $^{(\mathbf{67.7})}$5.0 & $^{(\mathbf{54.0})}$6.0 \\
    SA &  & $^{(92.3)}$35.0 & $^{(86.1)}$18.0 & $^{(72.9)}$8.5 & $^{(47.6)}$10.0 & $^{(37.4)}$10.0 \\ \bottomrule
\end{tabular}
\begin{tablenotes}
\item $^\dagger$ResNet18 in Torchvision library; only the latent space dimension differs. %
\end{tablenotes}
\end{threeparttable}
}
\label{tbl:cifar10-linf16-pretraining-trick}
\vspace{-1.8em}     %
\end{table}

Table~\ref{tbl:cifar10-linf16-pretraining-trick} summarizes our results. The middle three rows are the results of leveraging the in-domain data. We train the ResNet18 models on the tampered training set for 10 epochs and continue training on the denoised %
training data for the rest 30 epochs. Our first strategy (warm-starting) increases the accuracy of the defended models while defeating clean-label poisoning attacks. Under strong defense guarantees $\sigma\!>\!0.1$, the models have 0.5--2.2\% increased accuracy, compared to the baseline, while keeping the attack success $\sim$10\%. In $\sigma=$ 0.1, we achieve the same accuracy and defense successes. Our strategy can be potentially useful when a defender needs a stronger guarantee, such as against stronger clean-label poisoning future work will develop.

\noindent
\textbf{Using models pre-trained on out-of-domain data.}
Now, instead of running warm-starting on our end, we can also leverage ``warm" models available from the legitimate repositories. To evaluate, we take the ResNet18 model pre-trained on ImageNet-21k. %
We use this pre-trained model in two practical ways. We take the model as-is and train it on the denoised training data. In the second approach, we combine the first approach with the previous idea. We train the model on the tampered training data for a few epochs and then train this fine-tuned model on the denoised data.

The bottom three rows of Table~\ref{tbl:cifar10-linf16-pretraining-trick} are the results of using the warm model. We fine-tune the ResNet18 for 40 epochs on the denoised training data. %
In many cases, our second strategy can improve the accuracy of the defended models trained with strong defense guarantees ($\sigma\!>$ 0.1). These models achieve 6.5--13.2\% greater accuracy than the results shown in Table~\ref{tbl:cifar10-denoise-linf16-scratch-all} and~\ref{tbl:cifar10-denoise-linf16-transfer-learning-all}. In $\sigma\!=$ 1.0, the final accuracy of defended models has a negligible difference from the baseline. We are the first work to offer practical strategies to manage the utility-security trade-off in certified defense.

\subsection{Comparison to Existing Poisoning Defenses}
\label{subsec:comparison-to-prior-work}

We finally examine how our defense works better/worse than %
five %
existing defenses: %
k-NN~\citep{DeepKNN:20}, DP-SGD~\citep{DP:19, DP:20}, AT~\citep{AP:21}, FrieNDs~\citep{FrieNDs:22}, and ROE~\citep{rezaei2023run}. The ROE is a certified defense%
, and the other four are non-certified ones. %
We use %
$\ell_{\infty}$-norm of 16 in CIFAR10.

\begin{table}[h]
\centering
\caption{\textbf{Comparison of ours to Deep kNN.}
}
\label{tbl:knn-defense}
\vspace{-0.6em}
\adjustbox{max width=0.84\linewidth}{%
\begin{tabular}{@{}l|rrr|r@{}}
\toprule
\textbf{Poisoning attacks} & \textbf{TP} & \textbf{FP} & \textbf{Trained} & \textbf{Ours {(\small$\sigma\!=\!0.1$)}} \\ \midrule \midrule
Witches' Brew & 31 & 464 & $^{(86.3)}$\,\,\,3.0 & $^{(86.5)}$9.0 \\
Bullseye Polytope & 431 & 63 & $^{(93.7)}$25.0 & $^{(93.3)}$0.0 \\ \bottomrule
\end{tabular}
}
\vspace{-1.em}      %
\end{table}

\noindent
\textbf{Deep kNN} removes poisons from the training data by leveraging the k-nearest-neighbor (kNN) algorithm. They run kNN on the latent representations of the training data to identify potentially malicious samples (i.e., samples with labels different from the nearest ones) and remove them. We compare Deep kNN's effectiveness to our defense. We use the defense configurations that bring the best results in the original study, i.e., setting $k$ to 500. Table~\ref{tbl:defense-comparison} shows our results. Deep kNN fails to remove most poisoning samples (but it removes many benign samples!).

\begin{table}[ht]
\centering
\caption{\textbf{Comparison of ours to training-time defenses.}
}
\vspace{-0.6em}
\adjustbox{max width=\linewidth}{
    \begin{threeparttable}
    \begin{tabular}{@{}l|rrr|rr@{}}
    \toprule
     & \multicolumn{3}{c|}{\textbf{Heuristic defense}} & \multicolumn{2}{c}{\textbf{Certified defense}} \\ \midrule
    \textbf{Poisoning attacks} & \multicolumn{1}{c}{\textbf{DP-SGD}} & \multicolumn{1}{c}{\textbf{AT}} & \multicolumn{1}{c|}{\textbf{FrieNDs}} & \multicolumn{1}{c}{\textbf{ROE}} & \multicolumn{1}{c}{\textbf{Ours}} \\ \midrule \midrule
    Bullseye Polytope & $^{(93.9)}$7.0 & $^{(90.3)}$96.0 & $^{(90.2)}$10.0 & {\footnotesize $^{*}$N/A} & $^{(93.5)}$\,\,\,0.0 \\
    Witches' Brew & $^{(76.0)}$4.0 & $^{(66.5)}$\,\,\,2.0 & $^{(87.6)}$\,\,\,8.0 & $^{(70.2)}$10.0 & $^{(86.5)}$\,\,\,9.0 \\
    Sleeper Agent & $^{(74.8)}$5.0 & $^{(69.0)}$\,\,\,6.0 & $^{(87.4)}$\,14.0 & $^{(68.6)}$12.0 & $^{(86.0)}$17.0 \\ \bottomrule
    \end{tabular}
    \begin{tablenotes}
    \item $^{*}$ROE is incompatible with BP as ROE needs to train 250 models \emph{from scratch}.
    \end{tablenotes}
    \end{threeparttable}
}
\label{tbl:defense-comparison}
\vspace{-1.em}      %
\end{table}

\noindent
\textbf{DP-SGD.} \citet{DP:19} proposed a certified defense against poisoning attacks that leverage differential privacy (DP). DP makes a model less sensitive to a single-sample modification to the training data. But the guarantee is weak in practice; for example, the certificate is only valid in CIFAR10, when an adversary tampers one training sample. \citet{DP:20} later empirically shows that DP is still somewhat effective against poisoning. We compare our defense to the training with $(1.0, 0.05)$-DP%
, follow the prior work~\citep{lecuyer2019certified}.
In Table~\ref{tbl:defense-comparison}, %
DP significantly reduces the attack success to 0--16\%. Our defense achieves at most 10\% higher accuracy than DP. DP reduces the attack success slightly more against clean-label backdooring.

\noindent
\textbf{AT.} \citet{AP:21} adapts the adversarial training (AT) %
for defeating clean-label poisoning: in each mini-batch, instead of crafting adversarial examples, they synthesize poisoning samples and have a model to make correct classifications on them. %
While effective, the model could overfit a specific poisoning attack %
used in the adapted AT, leaving the model vulnerable to unseen poisoning attacks. We thus compare our defense with the original AT with the PGD-7 and $\varepsilon$ of 4%
, assuming that clean-label poisons %
are already adversarial examples. AT may not add more perturbations to %
them during training. %
In Table~\ref{tbl:defense-comparison}, %
our defense achieves higher accuracy than the robust models while making the attacks ineffective. Interestingly, AT cannot defeat the BP. %

\noindent
\textbf{FrieNDs.} \citet{FrieNDs:22} advances the idea of training robust models. Instead of adding random Gaussian noise to the data during training, they use ``friendly" noise, pre-computed by a defender, that minimizes the accuracy loss to a model they will train. When their defense is in action, they train a model on the tampered training data with two types of noise: the friendly noise computed before and a weak random noise sampled uniformly within a bound. %
Ours and FrieNDs greatly reduce the attack success while preserving the model's utility. The critical difference is that ours is a certified defense, while FrieNDs is not. A future work we leave is developing adaptive %
attacks against FrieNDs.

\noindent
\textbf{ROE.}
Certified defenses%
~\citep{steinhardt2017certified, DP:19, diakonikolas2019sever, DPA, gao2021learning, BagFlip, DPA2, rezaei2023run} are; however, %
most of these defenses showed their effectiveness in limited scenarios, such as in MINST-17~\citep{BagFlip}, a subset of MNIST containing only the samples of the digits 1 and 7. In consequence, they aren’t compatible with our clean-label poisoning settings. A recent defense by~\citet{rezaei2023run} demonstrates the certified robustness in practical settings; we thus compare our defense to their run-off-election (ROE). In Table~\ref{tbl:defense-comparison}, we empirically show that our defense is comparable to ROE while %
we preserve the model accuracy more.

\textbf{Note on the computational efficiency.}
To be practically effective, 
most training-time defenses require additional mechanisms we need to ``add" to the training algorithm.
AT %
craft adversarial examples or clean-label poisons~\citep{AP:21}. DP-SGD introduces noise into the gradients during training. FrieNDs pre-computes friendly noise and then optimally applies the noise during the training, and ROE requires training of 250 base classifiers to achieve the improved accuracy. In contrast, we only require to run forwards \emph{once} with an off-the-shelf diffusion model.

\section{Discussion}
\label{sec:discussion}

Our work offers a new perspective on the cat-and-mouse game played in clean-label poisoning research.

\noindent
\textbf{Initial claims from the cat:}
Clean label poisoning attacks, with human-imperceptible perturbations that keeping the original labels, are an attractive option to inject malicious behaviors into models. Even if a victim carefully curates the training data, %
they remain effective by evading \emph{filtering} defenses that rely on statistical signatures distinguishing clean samples from poisons~\citep{RONI, tRONI:18, DeepKNN:20}.

\textbf{The mouse's counterarguments:} Yet, as seen in research on the adversarial robustness, such small perturbations can be \emph{brittle}~\citep{carlini2022certified}: by adding a large quantity of noise and then denoising the images, the adversarial perturbations become nearly removed---and so clean-label perturbations are also removed. In prior work (and in this work as well), we leverage this observation and propose both certified and non-certified heuristic defenses.

\noindent
\textbf{Additional arguments from the cat:}
A counterargument from the clean-label attack's side was that those defenses inevitably compromise a model's performance---a fact corroborated by the prior work on the adversarial robustness~\citep{tsipras2018robustness, zhang2019theoretically}. Similarly, defenses could greatly reduce the poisoning attack success, but at the same time, they decrease the %
accuracy of defended models, often tipping the scales in favor of the adversary. If a defense yields a CIFAR-10 model with 60--70\% utility, what is the point of having such models in practice? Indeed, our own certified models require degraded utility to achieve %
certified predictions. Other defenses, e.g., DP-SGD~\citep{DP:19, DP:20}, are computationally demanding, increasing the training time %
an order of magnitude.

This %
game need not be as pessimistic as the cat portrays.

By leveraging an off-the-shelf %
DDPMs~\citep{ho2020denoising, nichol2021improved}, we can purify the training data, possibly containing malicious samples, \emph{offline} and render six clean-label attacks ineffective. With a weak provable guarantee against the training data contamination ($\ell_{\infty}$-norm of 2), the majority of attacks reach to 0\% success. A few recent attacks exhibit 0--10\% success rate%
---a rate comparable to random misclassification of test-time inputs.

In contrast to the cat's counterargument, %
we can also address the accuracy degradation problem---as long as we do not need certified robustness. Existing defenses that work to defend against poisoning require applying a particular training algorithm designed by the defender~\citep{DP:20, AP:21, FrieNDs:22, DPA, DPA2, BagFlip}. By decoupling the training process and the defense, we open a new opportunity %
to develop training techniques that can enhance the accuracy of the defended models while keeping the robustness. Our results propose two effective heuristic algorithms in improving defended models' utility, especially when the models are under strong defense guarantees (i.e., $||\epsilon||_{\infty} > 2$).

\newpage
\section*{Impact Statement}   %

Our results have no immediate negative consequences.
Below we discuss a few implications of our work for how
the field of poisoning may progress as a result of our methods.

First, our results do not imply that clean-label poisoning attacks are an ineffective threat and that we should stop studying. Instead, the results shows that clean-label poisoning has been studied in adversary-friendly settings. Current attacks are \emph{weak}: an adversary crafts poisons bounded to the perturbations of 16-norm, yet they become ineffective with our $2$-norm bounded defense. This implies the current attacks are \emph{not-fully} exploiting the 16-norm perturbation bound and the need to study the worst-case poisoning adversary who remains effective under our defense.

In addition, when studying stronger clean-label poisoning attacks, we encourage future work to include our defense as a strong baseline. We show in \S\ref{sec:evaluation} that the defense renders the state-of-the-art clean-label attacks ineffective and is, by far, the most effective defense among the existing ones. We believe that strong attacks future work will develop should at least survive against this defense. Our defense is also designed to run without any additional training costs---with an off-the-shelf diffusion model, one can easily denoise the tampered training data and run the training of a model on it.

It may be possible for future work to tighten the certified robustness our defense offers further. Making an analogy to the privacy-utility trade-off, prior work studies a series of methods that can improve the utility of differentially-private models with the same worst-case private leakage~\citep{tramer2020differentially, papernot2021tempered, ye2022differentially}. Similarly, in the context of clean-label poisoning, a promising direction for future work will be to find methods to achieve a better trade-off between the certified robustness to the clean-label attacks and the model utility.

{
    \bibliographystyle{icml2025}
    \bibliography{bib/security}

\begin{thebibliography}{47}
\providecommand{\natexlab}[1]{#1}
\providecommand{\url}[1]{\texttt{#1}}
\expandafter\ifx\csname urlstyle\endcsname\relax
  \providecommand{\doi}[1]{doi: #1}\else
  \providecommand{\doi}{doi: \begingroup \urlstyle{rm}\Url}\fi

\bibitem[Aghakhani et~al.(2021)Aghakhani, Meng, Wang, Kruegel, and Vigna]{BP}
Aghakhani, H., Meng, D., Wang, Y.-X., Kruegel, C., and Vigna, G.
\newblock Bullseye polytope: A scalable clean-label poisoning attack with
  improved transferability.
\newblock In \emph{2021 IEEE European Symposium on Security and Privacy
  (EuroS\&P)}, pp.\  159--178. IEEE, 2021.

\bibitem[Ash \& Adams(2020)Ash and Adams]{ash2020warm}
Ash, J. and Adams, R.~P.
\newblock On warm-starting neural network training.
\newblock \emph{Advances in neural information processing systems},
  33:\penalty0 3884--3894, 2020.

\bibitem[Biggio et~al.(2012)Biggio, Nelson, and Laskov]{PoisonSVM:12}
Biggio, B., Nelson, B., and Laskov, P.
\newblock Poisoning attacks against support vector machines.
\newblock \emph{arXiv preprint arXiv:1206.6389}, 2012.

\bibitem[Borgnia et~al.(2021)Borgnia, Geiping, Cherepanova, Fowl, Gupta,
  Ghiasi, Huang, Goldblum, and Goldstein]{DP-InstaHide:21}
Borgnia, E., Geiping, J., Cherepanova, V., Fowl, L., Gupta, A., Ghiasi, A.,
  Huang, F., Goldblum, M., and Goldstein, T.
\newblock Dp-instahide: Provably defusing poisoning and backdoor attacks with
  differentially private data augmentations.
\newblock \emph{arXiv preprint arXiv:2103.02079}, 2021.

\bibitem[Carlini(2021)]{PoisonSemiSupervise:21}
Carlini, N.
\newblock Poisoning the unlabeled dataset of $\{$Semi-Supervised$\}$ learning.
\newblock In \emph{30th USENIX Security Symposium (USENIX Security 21)}, pp.\
  1577--1592, 2021.

\bibitem[Carlini et~al.(2022)Carlini, Tramer, Dvijotham, Rice, Sun, and
  Kolter]{carlini2022certified}
Carlini, N., Tramer, F., Dvijotham, K.~D., Rice, L., Sun, M., and Kolter, J.~Z.
\newblock (certified!!) adversarial robustness for free!
\newblock \emph{arXiv preprint arXiv:2206.10550}, 2022.

\bibitem[Cohen et~al.(2019)Cohen, Rosenfeld, and Kolter]{cohen2019certified}
Cohen, J., Rosenfeld, E., and Kolter, Z.
\newblock Certified adversarial robustness via randomized smoothing.
\newblock In \emph{international conference on machine learning}, pp.\
  1310--1320. PMLR, 2019.

\bibitem[Diakonikolas et~al.(2019)Diakonikolas, Kamath, Kane, Li, Steinhardt,
  and Stewart]{diakonikolas2019sever}
Diakonikolas, I., Kamath, G., Kane, D., Li, J., Steinhardt, J., and Stewart, A.
\newblock Sever: A robust meta-algorithm for stochastic optimization.
\newblock In \emph{International Conference on Machine Learning}, pp.\
  1596--1606. PMLR, 2019.

\bibitem[Gao et~al.(2021)Gao, Karbasi, and Mahmoody]{gao2021learning}
Gao, J., Karbasi, A., and Mahmoody, M.
\newblock Learning and certification under instance-targeted poisoning.
\newblock In \emph{Uncertainty in Artificial Intelligence}, pp.\  2135--2145.
  PMLR, 2021.

\bibitem[Geiping et~al.(2021{\natexlab{a}})Geiping, Fowl, Somepalli, Goldblum,
  Moeller, and Goldstein]{AP:21}
Geiping, J., Fowl, L., Somepalli, G., Goldblum, M., Moeller, M., and Goldstein,
  T.
\newblock What doesn't kill you makes you robust (er): How to adversarially
  train against data poisoning.
\newblock \emph{arXiv preprint arXiv:2102.13624}, 2021{\natexlab{a}}.

\bibitem[Geiping et~al.(2021{\natexlab{b}})Geiping, Fowl, Huang, Czaja, Taylor,
  Moeller, and Goldstein]{WitchesBrew}
Geiping, J., Fowl, L.~H., Huang, W.~R., Czaja, W., Taylor, G., Moeller, M., and
  Goldstein, T.
\newblock Witches' brew: Industrial scale data poisoning via gradient matching.
\newblock In \emph{International Conference on Learning Representations},
  2021{\natexlab{b}}.
\newblock URL \url{https://openreview.net/forum?id=01olnfLIbD}.

\bibitem[Ho et~al.(2020)Ho, Jain, and Abbeel]{ho2020denoising}
Ho, J., Jain, A., and Abbeel, P.
\newblock Denoising diffusion probabilistic models.
\newblock \emph{Advances in neural information processing systems},
  33:\penalty0 6840--6851, 2020.

\bibitem[Hong et~al.(2020)Hong, Chandrasekaran, Kaya, Dumitra{\c{s}}, and
  Papernot]{DP:20}
Hong, S., Chandrasekaran, V., Kaya, Y., Dumitra{\c{s}}, T., and Papernot, N.
\newblock On the effectiveness of mitigating data poisoning attacks with
  gradient shaping.
\newblock \emph{arXiv preprint arXiv:2002.11497}, 2020.

\bibitem[Huang et~al.(2020)Huang, Geiping, Fowl, Taylor, and
  Goldstein]{MetaPoison}
Huang, W.~R., Geiping, J., Fowl, L., Taylor, G., and Goldstein, T.
\newblock Metapoison: Practical general-purpose clean-label data poisoning.
\newblock \emph{Advances in Neural Information Processing Systems},
  33:\penalty0 12080--12091, 2020.

\bibitem[Jagielski et~al.(2018)Jagielski, Oprea, Biggio, Liu, Nita-Rotaru, and
  Li]{PoisonRegression:17}
Jagielski, M., Oprea, A., Biggio, B., Liu, C., Nita-Rotaru, C., and Li, B.
\newblock Manipulating machine learning: Poisoning attacks and countermeasures
  for regression learning.
\newblock In \emph{2018 IEEE symposium on security and privacy (SP)}, pp.\
  19--35. IEEE, 2018.

\bibitem[Krizhevsky et~al.(2009)Krizhevsky, Hinton, et~al.]{CIFAR10}
Krizhevsky, A., Hinton, G., et~al.
\newblock Learning multiple layers of features from tiny images.
\newblock 2009.

\bibitem[Le \& Yang(2015)Le and Yang]{TinyImageNet}
Le, Y. and Yang, X.
\newblock Tiny imagenet visual recognition challenge.
\newblock \emph{CS 231N}, 7\penalty0 (7):\penalty0 3, 2015.

\bibitem[Lecuyer et~al.(2019)Lecuyer, Atlidakis, Geambasu, Hsu, and
  Jana]{lecuyer2019certified}
Lecuyer, M., Atlidakis, V., Geambasu, R., Hsu, D., and Jana, S.
\newblock Certified robustness to adversarial examples with differential
  privacy.
\newblock In \emph{2019 IEEE symposium on security and privacy (SP)}, pp.\
  656--672. IEEE, 2019.

\bibitem[Levine \& Feizi(2020)Levine and Feizi]{DPA}
Levine, A. and Feizi, S.
\newblock Deep partition aggregation: Provable defense against general
  poisoning attacks.
\newblock \emph{arXiv preprint arXiv:2006.14768}, 2020.

\bibitem[Liu et~al.(2022)Liu, Yang, and Mirzasoleiman]{FrieNDs:22}
Liu, T.~Y., Yang, Y., and Mirzasoleiman, B.
\newblock Friendly noise against adversarial noise: a powerful defense against
  data poisoning attack.
\newblock \emph{Advances in Neural Information Processing Systems},
  35:\penalty0 11947--11959, 2022.

\bibitem[Lucas et~al.(2023)Lucas, Jagielski, Tram{\`e}r, Bauer, and
  Carlini]{lucas2023randomness}
Lucas, K., Jagielski, M., Tram{\`e}r, F., Bauer, L., and Carlini, N.
\newblock Randomness in ml defenses helps persistent attackers and hinders
  evaluators.
\newblock \emph{arXiv preprint arXiv:2302.13464}, 2023.

\bibitem[Ma et~al.(2019)Ma, Zhu, and Hsu]{DP:19}
Ma, Y., Zhu, X., and Hsu, J.
\newblock Data poisoning against differentially-private learners: Attacks and
  defenses.
\newblock \emph{arXiv preprint arXiv:1903.09860}, 2019.

\bibitem[Nelson et~al.(2008{\natexlab{a}})Nelson, Barreno, Chi, Joseph,
  Rubinstein, Saini, Sutton, Tygar, and Xia]{RONI}
Nelson, B., Barreno, M., Chi, F.~J., Joseph, A.~D., Rubinstein, B.~I., Saini,
  U., Sutton, C., Tygar, J.~D., and Xia, K.
\newblock Exploiting machine learning to subvert your spam filter.
\newblock \emph{LEET}, 8\penalty0 (1-9):\penalty0 16--17, 2008{\natexlab{a}}.

\bibitem[Nelson et~al.(2008{\natexlab{b}})Nelson, Barreno, Chi, Joseph,
  Rubinstein, Saini, Sutton, Tygar, and Xia]{SpamFilter:08}
Nelson, B., Barreno, M., Chi, F.~J., Joseph, A.~D., Rubinstein, B.~I., Saini,
  U., Sutton, C., Tygar, J.~D., and Xia, K.
\newblock Exploiting machine learning to subvert your spam filter.
\newblock \emph{LEET}, 8\penalty0 (1-9):\penalty0 16--17, 2008{\natexlab{b}}.

\bibitem[Nichol \& Dhariwal(2021)Nichol and Dhariwal]{nichol2021improved}
Nichol, A.~Q. and Dhariwal, P.
\newblock Improved denoising diffusion probabilistic models.
\newblock In \emph{International Conference on Machine Learning}, pp.\
  8162--8171. PMLR, 2021.

\bibitem[Papernot et~al.(2021)Papernot, Thakurta, Song, Chien, and
  Erlingsson]{papernot2021tempered}
Papernot, N., Thakurta, A., Song, S., Chien, S., and Erlingsson, {\'U}.
\newblock Tempered sigmoid activations for deep learning with differential
  privacy.
\newblock In \emph{Proceedings of the AAAI Conference on Artificial
  Intelligence}, volume~35, pp.\  9312--9321, 2021.

\bibitem[Peri et~al.(2020)Peri, Gupta, Huang, Fowl, Zhu, Feizi, Goldstein, and
  Dickerson]{DeepKNN:20}
Peri, N., Gupta, N., Huang, W.~R., Fowl, L., Zhu, C., Feizi, S., Goldstein, T.,
  and Dickerson, J.~P.
\newblock Deep k-nn defense against clean-label data poisoning attacks.
\newblock In \emph{Computer Vision--ECCV 2020 Workshops: Glasgow, UK, August
  23--28, 2020, Proceedings, Part I 16}, pp.\  55--70. Springer, 2020.

\bibitem[Rezaei et~al.(2023)Rezaei, Banihashem, Chegini, and
  Feizi]{rezaei2023run}
Rezaei, K., Banihashem, K., Chegini, A., and Feizi, S.
\newblock Run-off election: Improved provable defense against data poisoning
  attacks.
\newblock \emph{arXiv preprint arXiv:2302.02300}, 2023.

\bibitem[Rubinstein et~al.(2009)Rubinstein, Nelson, Huang, Joseph, Lau, Rao,
  Taft, and Tygar]{Antidote:09}
Rubinstein, B.~I., Nelson, B., Huang, L., Joseph, A.~D., Lau, S.-h., Rao, S.,
  Taft, N., and Tygar, J.~D.
\newblock Antidote: understanding and defending against poisoning of anomaly
  detectors.
\newblock In \emph{Proceedings of the 9th ACM SIGCOMM Conference on Internet
  Measurement}, pp.\  1--14, 2009.

\bibitem[Saha et~al.(2020)Saha, Subramanya, and Pirsiavash]{HT}
Saha, A., Subramanya, A., and Pirsiavash, H.
\newblock Hidden trigger backdoor attacks.
\newblock In \emph{Proceedings of the AAAI conference on artificial
  intelligence}, volume~34, pp.\  11957--11965, 2020.

\bibitem[Salman et~al.(2020)Salman, Sun, Yang, Kapoor, and
  Kolter]{denoisedsmoothing}
Salman, H., Sun, M., Yang, G., Kapoor, A., and Kolter, J.~Z.
\newblock Denoised smoothing: A provable defense for pretrained classifiers.
\newblock In Larochelle, H., Ranzato, M., Hadsell, R., Balcan, M., and Lin, H.
  (eds.), \emph{Advances in Neural Information Processing Systems}, volume~33,
  pp.\  21945--21957. Curran Associates, Inc., 2020.
\newblock URL
  \url{https://proceedings.neurips.cc/paper_files/paper/2020/file/f9fd2624beefbc7808e4e405d73f57ab-Paper.pdf}.

\bibitem[Schwarzschild et~al.(2021)Schwarzschild, Goldblum, Gupta, Dickerson,
  and Goldstein]{PoisoningBenchmark}
Schwarzschild, A., Goldblum, M., Gupta, A., Dickerson, J.~P., and Goldstein, T.
\newblock Just how toxic is data poisoning? a unified benchmark for backdoor
  and data poisoning attacks.
\newblock In \emph{International Conference on Machine Learning}, pp.\
  9389--9398. PMLR, 2021.

\bibitem[Shafahi et~al.(2018)Shafahi, Huang, Najibi, Suciu, Studer, Dumitras,
  and Goldstein]{Frogs}
Shafahi, A., Huang, W.~R., Najibi, M., Suciu, O., Studer, C., Dumitras, T., and
  Goldstein, T.
\newblock Poison frogs! targeted clean-label poisoning attacks on neural
  networks.
\newblock \emph{Advances in neural information processing systems}, 31, 2018.

\bibitem[Sohl-Dickstein et~al.(2015)Sohl-Dickstein, Weiss, Maheswaranathan, and
  Ganguli]{sohl2015deep}
Sohl-Dickstein, J., Weiss, E., Maheswaranathan, N., and Ganguli, S.
\newblock Deep unsupervised learning using nonequilibrium thermodynamics.
\newblock In \emph{International conference on machine learning}, pp.\
  2256--2265. PMLR, 2015.

\bibitem[Souri et~al.(2022)Souri, Fowl, Chellappa, Goldblum, and
  Goldstein]{SleeperAgent}
Souri, H., Fowl, L., Chellappa, R., Goldblum, M., and Goldstein, T.
\newblock Sleeper agent: Scalable hidden trigger backdoors for neural networks
  trained from scratch.
\newblock \emph{Advances in Neural Information Processing Systems},
  35:\penalty0 19165--19178, 2022.

\bibitem[Steinhardt et~al.(2017)Steinhardt, Koh, and
  Liang]{steinhardt2017certified}
Steinhardt, J., Koh, P. W.~W., and Liang, P.~S.
\newblock Certified defenses for data poisoning attacks.
\newblock \emph{Advances in neural information processing systems}, 30, 2017.

\bibitem[Suciu et~al.(2018)Suciu, Marginean, Kaya, III, and Dumitras]{tRONI:18}
Suciu, O., Marginean, R., Kaya, Y., III, H.~D., and Dumitras, T.
\newblock When does machine learning {FAIL}? generalized transferability for
  evasion and poisoning attacks.
\newblock In \emph{27th USENIX Security Symposium (USENIX Security 18)}, pp.\
  1299--1316, Baltimore, MD, August 2018. USENIX Association.
\newblock ISBN 978-1-939133-04-5.
\newblock URL
  \url{https://www.usenix.org/conference/usenixsecurity18/presentation/suciu}.

\bibitem[Tramer \& Boneh(2020)Tramer and Boneh]{tramer2020differentially}
Tramer, F. and Boneh, D.
\newblock Differentially private learning needs better features (or much more
  data).
\newblock \emph{arXiv preprint arXiv:2011.11660}, 2020.

\bibitem[Tran et~al.(2018)Tran, Li, and Madry]{Spectral}
Tran, B., Li, J., and Madry, A.
\newblock Spectral signatures in backdoor attacks.
\newblock \emph{Advances in neural information processing systems}, 31, 2018.

\bibitem[Tsipras et~al.(2018)Tsipras, Santurkar, Engstrom, Turner, and
  Madry]{tsipras2018robustness}
Tsipras, D., Santurkar, S., Engstrom, L., Turner, A., and Madry, A.
\newblock Robustness may be at odds with accuracy.
\newblock \emph{arXiv preprint arXiv:1805.12152}, 2018.

\bibitem[Turner et~al.(2019)Turner, Tsipras, and Madry]{LC}
Turner, A., Tsipras, D., and Madry, A.
\newblock Label-consistent backdoor attacks.
\newblock \emph{arXiv preprint arXiv:1912.02771}, 2019.

\bibitem[Wang et~al.(2022)Wang, Levine, and Feizi]{DPA2}
Wang, W., Levine, A.~J., and Feizi, S.
\newblock Improved certified defenses against data poisoning with
  (deterministic) finite aggregation.
\newblock In \emph{International Conference on Machine Learning}, pp.\
  22769--22783. PMLR, 2022.

\bibitem[Weber et~al.(2023)Weber, Xu, Karla{\v{s}}, Zhang, and Li]{RAB}
Weber, M., Xu, X., Karla{\v{s}}, B., Zhang, C., and Li, B.
\newblock Rab: Provable robustness against backdoor attacks.
\newblock In \emph{2023 IEEE Symposium on Security and Privacy (SP)}, pp.\
  1311--1328. IEEE, 2023.

\bibitem[Ye \& Shokri(2022)Ye and Shokri]{ye2022differentially}
Ye, J. and Shokri, R.
\newblock Differentially private learning needs hidden state (or much faster
  convergence).
\newblock \emph{Advances in Neural Information Processing Systems},
  35:\penalty0 703--715, 2022.

\bibitem[Zhang et~al.(2019)Zhang, Yu, Jiao, Xing, El~Ghaoui, and
  Jordan]{zhang2019theoretically}
Zhang, H., Yu, Y., Jiao, J., Xing, E., El~Ghaoui, L., and Jordan, M.
\newblock Theoretically principled trade-off between robustness and accuracy.
\newblock In \emph{International conference on machine learning}, pp.\
  7472--7482. PMLR, 2019.

\bibitem[Zhang et~al.(2022)Zhang, Albarghouthi, and D'Antoni]{BagFlip}
Zhang, Y., Albarghouthi, A., and D'Antoni, L.
\newblock Bagflip: A certified defense against data poisoning.
\newblock \emph{Advances in Neural Information Processing Systems},
  35:\penalty0 31474--31483, 2022.

\bibitem[Zhu et~al.(2019)Zhu, Huang, Shafahi, Taylor, Studer, and
  Goldstein]{CP}
Zhu, C., Huang, W.~R., Shafahi, Ali~and, L.~H., Taylor, G., Studer, C., and
  Goldstein, T.
\newblock Transferable clean-label poisoning attacks on deep neural nets.
\newblock In \emph{International Conference on Machine Learning}, pp.\
  7614--7623, 2019.

\end{thebibliography}
}

\newpage
\appendix
\onecolumn
\newpage
\appendix
\onecolumn

\section{Experimental Setup in Detail}
\label{appendix:detailed-configurations}

We adapt the poisoning benchmark from the prior work~\citep{PoisoningBenchmark}. Most work on clean-label poisoning attacks uses this benchmark for showcasing their attack success. The open-sourced implementation of Sleeper Agent is incompatible with this framework; thus, we run this attack separately. We implement the attacks to use a comparable $\ell_2$-norm bound in the poison-crafting process (see \S\ref{appendix:norm-conversion}). We implemented our benchmarking framework in Python v3.7\footnote{\href{https://python.org}{https://www.python.org/}} and PyTorch v1.13\footnote{\href{https://pytorch.org}{https://pytorch.org}}. We use the exact attack configurations and training hyper-parameters that the original study employs. 

In contrast to the original work, we made two major differences. We first increase the perturbations bounded to $||\epsilon||_{\infty}\!=\!16$ as the 8-pixel bound attacks do not result in a high attack success rate. Defeating 8-pixel bounded attacks is trivial for any poisoning defenses. Second, we do not use their fine-tuning subset, which only contains 2500 training samples and 25--50 poisons. It (as shown in the next subsections) leads to 60--70\% accuracy on the clean CIFAR10 test-set, significantly lower than 80--90\%, which can be trivially obtained with any models and training configurations. If a model trained on the contaminated training data misclassifies a target, it could be a mistake caused by a poorly performing model.

We run a comprehensive evaluation. We run 7 attacks; for each attack, we run 100 times of crafting and training/fine-tuning a model. We also examine 5 different denoising factor $\sigma$. In total, we ran 3500 poisoning attacks. For the most successful three attacks, we run 6 different certified and non-certified defenses over 100 poisoning runs. The three training-time defenses (DP-SGD, AT, FrieNDs) require 100 trainings of a model for each, and the certified defense (RoE) requires 100$\times$250 trainings in total. To accommodate this computational overhead, we use two machines, each equipped with 8 Nvidia GPUs. Crafting a poisoning set takes approximately one hour and training takes 30 minutes on a single GPU.

\section{Translating $l_{\infty}$-bound into $l_2$-bound}
\label{appendix:norm-conversion}

The certification we offer in Sec~\ref{sec:certified-defense} is defined in $l_2$-norm, while most poisoning attacks work with $l_{\infty}$-norm, e.g., of 8 or 16. We therefore convert these $l_{\infty}$-bounds into $l_2$-bounds. We assume the worst-case perturbation in the $l_{\infty}$-space that changes every pixel location of an image by 16 pixels, compute the $l_2$-norm of that perturbation as follows, and use it in \S\ref{sec:evaluation}:
\begin{itemize}[noitemsep, topsep=0.em]
    \item $l_{\infty}$-bound of 8 pixels: $\sqrt{3 * 32 * 32 * (8/255)^2} = 1.74$
    \item $l_{\infty}$-bound of 16 pixels: $\sqrt{3 * 32 * 32 * (16/255)^2} = 3.48$
\end{itemize}
Oftentimes, the $\ell_2$ attacks with the comparable bound 3.48 do not lead to comparable attack success. We, therefore, especially for the poisoning attacks in the transfer-learning scenarios, increase the bound to 6.43. We note that most existing attacks either do not implement $\ell_2$ bounds or are unbounded~\citep{Frogs}.

\section{Proof for Our Provable Guarantee}
\label{appendix:certification-proof}

This section provides the proof of our provable robustness
against clean-label poisoning attacks within the $\ell_2$-norm.

\textbf{Studies on certified robustness.}
We first note that certified robustness 
has been explored for other types of attacks, 
such as adversarial examples~\citep{cohen2019certified, denoisedsmoothing, carlini2022certified}
and backdoor attacks~\cite{RAB},
which assumes different threat models 
compared to clean-label poisoning.
\citet{cohen2019certified} 
(certified defenses against adversarial examples)
operate under the assumption that
the target classifier remains \emph{stationary},
whereas clean-label poisoning involves the victim
``trains" the classifier on the compromised data.
\citet{RAB} presents the provable guarantee against backdooring,
where classifiers are trained on datasets 
containing backdoor poisons.
But, backdoor attacks differ substantially
as they assume that the adversary can manipulate test-time inputs
with specific trigger patterns.

\textbf{Comparison to the prior studies.}
Our provable guarantee can be viewed as 
a special case of the backdoor robustness framework
proposed by~\citet{RAB}, where the backdoor adversary
applies zero perturbations ($\varepsilon\!=\!0$) to test-time inputs.
We also highlight that our use of diffusion denoising
as a randomized smoothing mechanism enhances 
classification accuracy by 1.8$\times$ compared to
the results reported by~\citep{RAB}.
Moreover, to achieve the same level of certified accuracy,
we need 20$\times$ less computational cost 
compared to prior studies~\cite{cohen2019certified, RAB}.

\textbf{Proof.}
On a high-level, we first review the robustness guarantee 
provided by~\cite{cohen2019certified} to adversarial examples,
specifically for $\ell_2$-norm perturbations in the input space.
Next, we elaborate on how~\citet{RAB} extend the previous robustness%
---achieved through training with randomized smoothing.
By smoothing the training data first and then training a classifier on it,
\citet{RAB} establish a provable guarantee against backdooring%
---an attack exploits data poisoning.
We finally reduce the robustness guarantee against backdooring
to a special case where no input perturbations occur at test-time,
thereby providing a provable guarantee against clean-label poisoning.

\begin{theorem}[Robustness guarantee by~\citet{cohen2019certified}]
\label{thm:cohen}
    Let $f: \mathbb{R}^{d}\rightarrow \mathcal{Y}$ be any deterministic (or random) function,
    $\varepsilon \sim \mathcal{N}(0, \sigma^2, I)$, and
    $g$ to be the smoothed classifier robust under $\varepsilon$.
    Suppose $c_A\in\mathcal{Y}$ and $p_A, p_B \in [0, 1]$ satisfy:
    \begin{align*}
        \mathbb{P}(f(x+\varepsilon)=c_A) \geq p_A > p_B
        \geq \underset{c \neq c_A}{\max}\mathbb{P}(f(x+\varepsilon)=c)
    \end{align*}
    Then $g(x+\delta)=c_A$ for all $||\delta||_2 < R$, where
    \begin{align*}
        R = \frac{\sigma}{2}\big(\Phi^{-1}(p_A) - \Phi^{-1}(p_B)\big)
    \end{align*}
\end{theorem}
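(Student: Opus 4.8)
The plan is to follow the randomized-smoothing argument through the Neyman--Pearson lemma. Writing the smoothed classifier as $g(x) = \argmax_c \mathbb{P}(f(x+\varepsilon)=c)$, it suffices to show that for every $\delta$ with $\|\delta\|_2 < R$ we have $\mathbb{P}(f(x+\delta+\varepsilon)=c_A) > \max_{c\neq c_A}\mathbb{P}(f(x+\delta+\varepsilon)=c)$; once this holds, $c_A$ is the strict argmax at $x+\delta$ and hence $g(x+\delta)=c_A$. The core difficulty is that $f$ is an \emph{arbitrary} (possibly random) function, so I cannot compute these probabilities directly. Instead I would bound each of them by the worst-case $f$ consistent with the hypotheses, and the engine for identifying that worst case is Neyman--Pearson.

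First I would record the Gaussian Neyman--Pearson fact specialized to the two isotropic Gaussians $\mathcal{N}(x,\sigma^2 I)$ and $\mathcal{N}(x+\delta,\sigma^2 I)$: since the likelihood ratio between them is monotone in the projection $\delta^\top z$, the extremal sets for transporting probability mass between the two Gaussians are half-spaces of the form $\{z:\delta^\top(z-x)\le \beta\}$. Concretely, I would show that if $h:\mathbb{R}^d\to\{0,1\}$ satisfies $\mathbb{P}(h(x+\varepsilon)=1)\ge p$ and $A_\beta$ is the half-space with $\mathbb{P}(x+\varepsilon\in A_\beta)=p$, then $\mathbb{P}(h(x+\delta+\varepsilon)=1)\ge \mathbb{P}(x+\delta+\varepsilon\in A_\beta)$, together with the symmetric upper bound when the constraint is $\le p$. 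This reduction is the main obstacle, and I expect to spend most of the proof here; everything afterward is a one-dimensional Gaussian computation.

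With the half-space in hand, I would apply the lower bound to the indicator of $\{f=c_A\}$ and the upper bound to $\{f=c\}$ for each competing class. The key calculation is that the projection $\delta^\top\varepsilon$ is a one-dimensional Gaussian with standard deviation $\sigma\|\delta\|_2$, so choosing $\beta$ to enforce $\mathbb{P}(x+\varepsilon\in A_\beta)=p_A$ gives $\beta=\sigma\|\delta\|_2\,\Phi^{-1}(p_A)$, and shifting the mean by $\delta$ shifts the projection by $\|\delta\|_2^2$. This yields
\begin{align*}
\mathbb{P}(f(x+\delta+\varepsilon)=c_A) \ge \Phi\!\big(\Phi^{-1}(p_A)-\tfrac{\|\delta\|_2}{\sigma}\big),
\end{align*}
and, using the hypothesis $\max_{c\neq c_A}\mathbb{P}(f(x+\varepsilon)=c)\le p_B$ on each competing class, the matching upper bound $\Phi\!\big(\Phi^{-1}(p_B)+\|\delta\|_2/\sigma\big)$. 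Because a single $p_B$ controls every competitor, one radius will suffice to dominate all classes at once.

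Finally I would combine the two bounds. By strict monotonicity of $\Phi$, the target inequality is guaranteed as soon as
\begin{align*}
\Phi^{-1}(p_A)-\tfrac{\|\delta\|_2}{\sigma} > \Phi^{-1}(p_B)+\tfrac{\|\delta\|_2}{\sigma},
\end{align*}
which rearranges exactly to $\|\delta\|_2 < \frac{\sigma}{2}\big(\Phi^{-1}(p_A)-\Phi^{-1}(p_B)\big)=R$. Since $\delta$ was arbitrary in the open ball of radius $R$, this establishes $g(x+\delta)=c_A$ throughout the ball and completes the proof. I would take care that both half-space bounds use the \emph{same} hyperplane normal $\delta$, and that the competitor bound is stated in terms of $p_B$ rather than individual class probabilities, since that is precisely what collapses the many two-class comparisons into the single clean radius $R$.
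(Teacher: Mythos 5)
Your proposal is correct and is essentially the proof of \citet{cohen2019certified} itself: the paper does not reprove this theorem but defers to Appendix A of that work, whose argument is exactly your Neyman--Pearson reduction to half-spaces, the two Gaussian bounds $\Phi\big(\Phi^{-1}(p_A)-\|\delta\|_2/\sigma\big)$ and $\Phi\big(\Phi^{-1}(p_B)+\|\delta\|_2/\sigma\big)$, and the final rearrangement to $R=\tfrac{\sigma}{2}\big(\Phi^{-1}(p_A)-\Phi^{-1}(p_B)\big)$. No gaps; your sketch matches the cited proof step for step.
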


Theorem~\ref{thm:cohen} establishes that 
the smoothed classifier $g$
is robust around a test-time input $x$
within the $\ell_2$-norm radius of $R$ as specified above.
Here $\Phi$ denotes the inverse of the standard Gaussian CDF.
For the proof of this theorem,
we refer readers to Appendix A
of the original study by~\citet{cohen2019certified}.

The previous approach by~\citet{cohen2019certified}
does not impose specific restrictions on 
\emph{how} the smoothed classifier $g$ is constructed,
A common method they employ is to 
apply additive Gaussian noise to the training data 
\emph{at each mini-batch during training}.
However, this method is not compatible with our setting
where we train a classifier \emph{on the smoothed training data}.
To address this challenge, 
we adopt the approach proposed by~\citet{RAB},
which relies on a smoothed classifier $g$
obtained through a method that aligns with our framework,
defined as follows:

\begin{definition}[Smoothed classifier considered by~\citet{RAB}]
\label{def:smoothed-classifier-rab}
    Suppose a base classifier $h$, defined as:
    \begin{align*}
        h(x, \mathcal{D}) = \argmax_{y}p(y|x, \mathcal{D}),
    \end{align*}
    where $p$ is learned from the training data $\mathcal{D}$
    and to maximize the conditional probability distribution over labels $y$.
    Then the smoothed classifier $q$ is defined by:
    \begin{align*}
        q(y|x, \mathcal{D}) = \mathbb{P}_{X, D}(h(x+X, \mathcal{D}+D)=y),
    \end{align*}
    where $X\sim\mathbb{P}_X$ represents random variables,
    and $D\sim\mathbb{P}_D$ denotes smoothing distributions.
    $D$ is a collection of $n$ i.i.d random variables,
    composed of $D^{i}$s, where each $D^{i}$ will be added to a training samples in $\mathcal{D}$.
    $\mathbb{P}_X$ is the noise distribution for test-time inputs,
    and $\mathbb{P}_D$ is the noise distribution for the training data.
    The final, smoothed classifier then assigns 
    the most likely class to an instance $x$ under $q$, so that:
    \begin{align*}
        g(x,\mathcal{D}) = \argmax_{y} q(y|x,\mathcal{D})
    \end{align*}
\end{definition}

Note that under this formulation, 
the smoothed classifier can defend against adversarial examples
by setting the training set noise to zero, $D \equiv 0$.
Similarly, setting the test-time noise to zero, $X \equiv 0$,
enables the classifier to defend against clean-label poisoning.
We will first follow the robustness framework by~\citet{RAB}
and then show the special case, $X \equiv 0$,
establishing our guarantee against clean-label poisoning.

\begin{theorem}[Robustness guarantee by~\citet{RAB}]
\label{thm:rab}
    Suppose $q$ is the smoothed classifier defined in Def.~\ref{def:smoothed-classifier-rab}
    with smoothing distribution $Z:=(X, D)$ with $X$ taking values in $\mathbb{R}^{d}$
    and $D$ being a collection of $n$ independent $\mathbb{R}^{d}$-valued random variables,
    $D=(D^1, D^2, ..., D^n)$.
    Let $\Omega_x \in \mathbb{R}^d$ and 
    $\Delta := (\delta^1, \delta^2, ..., \delta^n)$ 
    for backdoor patterns $\delta^i \in \mathbb{R}^d$.
    Let $y_A \in C$ and $p_A, p_B \in [0, 1]$
    such that $y_A = g(x, D)$ and
    \begin{align*}
        q(y_A|x, \mathcal{D}) \geq p_A > p_B \geq 
        \underset{y \neq y_A}{\max}q(y|x, \mathcal{D}).
    \end{align*}
    If the optimal type II errors, for testing the null $Z \sim \mathbb{P}_0$
    against the alternative $Z + (\Omega_x, \Delta) \sim \mathbb{P}_1$, satisfy
    \begin{align*}
        \beta^{*}(1-p_Z; \mathbb{P}_0, \mathbb{P}_1) +
        \beta^{*}(p_B; \mathbb{P}_0, \mathbb{P}_1) > 1,
    \end{align*}
    then it is guaranteed that $y_A = \argmax_{y}q(y|x+\Omega_x, \mathcal{D}+\Delta)$.
\end{theorem}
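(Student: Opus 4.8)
The plan is to recast the theorem as a statement about a simple-versus-simple hypothesis test and then invoke the Neyman--Pearson lemma, following the strategy of \citet{cohen2019certified} but in the generalized measure-theoretic form of \citet{RAB}. The crucial first step is a change-of-variables observation: perturbing the test point by $\Omega_x$ and the training set by $\Delta$ is equivalent to leaving the clean pair $(x,\mathcal{D})$ fixed while replacing the smoothing noise $Z\sim\mathbb{P}_0$ by the shifted noise $Z+(\Omega_x,\Delta)\sim\mathbb{P}_1$. Concretely, for the fixed $\{0,1\}$-valued map $\phi_y(z):=\mathbbm{1}[h((x,\mathcal{D})+z)=y]$, one has $q(y\,|\,x,\mathcal{D})=\mathbb{E}_{\mathbb{P}_0}[\phi_y]$ and $q(y\,|\,x+\Omega_x,\mathcal{D}+\Delta)=\mathbb{E}_{\mathbb{P}_1}[\phi_y]$, so the whole question reduces to how much the expectation of a fixed test function can move when the underlying measure changes from $\mathbb{P}_0$ to $\mathbb{P}_1$.

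I would then bound the two relevant class probabilities separately, each time optimizing over the worst-case base classifier consistent with the hypotheses. For the top class $y_A$, the adversary seeks to \emph{minimize} $\mathbb{E}_{\mathbb{P}_1}[\phi_{y_A}]$ subject only to $\mathbb{E}_{\mathbb{P}_0}[\phi_{y_A}]\geq p_A$. Viewing the set $\{\phi_{y_A}=0\}$ as the rejection region, this constraint caps the type I error at $1-p_A$, and by Neyman--Pearson the minimal attainable value of the type II error $\mathbb{E}_{\mathbb{P}_1}[\phi_{y_A}]$ is realized by a likelihood-ratio threshold test and equals $\beta^{*}(1-p_A;\mathbb{P}_0,\mathbb{P}_1)$ (this is the term written $\beta^{*}(1-p_Z;\cdot)$ in the statement, with $p_Z=p_A$):
\begin{align*}
    q(y_A\,|\,x+\Omega_x,\mathcal{D}+\Delta)\;\geq\;\beta^{*}(1-p_A;\mathbb{P}_0,\mathbb{P}_1).
\end{align*}
For any runner-up class $y\neq y_A$, the adversary instead \emph{maximizes} $\mathbb{E}_{\mathbb{P}_1}[\phi_y]$ subject to $\mathbb{E}_{\mathbb{P}_0}[\phi_y]\leq p_B$; applying the same lemma to the complement $1-\phi_y$ gives the dual bound
\begin{align*}
    \max_{y\neq y_A} q(y\,|\,x+\Omega_x,\mathcal{D}+\Delta)\;\leq\;1-\beta^{*}(p_B;\mathbb{P}_0,\mathbb{P}_1).
\end{align*}

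Finally, $y_A$ survives as the arg-max exactly when its guaranteed lower bound strictly exceeds the largest possible runner-up probability, i.e.\ $\beta^{*}(1-p_A;\mathbb{P}_0,\mathbb{P}_1)>1-\beta^{*}(p_B;\mathbb{P}_0,\mathbb{P}_1)$, which rearranges to the hypothesized condition $\beta^{*}(1-p_A;\mathbb{P}_0,\mathbb{P}_1)+\beta^{*}(p_B;\mathbb{P}_0,\mathbb{P}_1)>1$, and strictness delivers the strict separation. The main obstacle I anticipate is not this combination step but the rigorous justification of the two Neyman--Pearson optimizations at this level of generality: unlike the isotropic-Gaussian input-space setting of \citet{cohen2019certified}, here $\mathbb{P}_0$ and $\mathbb{P}_1$ live on the high-dimensional product space of joint test- and training-noise $(X,D)$ with $D=(D^1,\dots,D^n)$, so I must establish existence and measurability of the optimal likelihood-ratio tests, argue that randomized tests suffice to attain the extrema, and verify that the product structure of the training noise is correctly carried into the likelihood ratio. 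Once $\beta^{*}$ is shown to be well-defined and monotone and the Neyman--Pearson duality is in place, the three steps assemble directly into the claim; setting $X\equiv 0$ (hence $\Omega_x=0$) then specializes this to the clean-label poisoning guarantee of the corollary.
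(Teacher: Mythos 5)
Your proposal is correct and follows essentially the same route as the proof this theorem actually rests on: the paper itself does not reprove Theorem~\ref{thm:rab} but defers to Appendix~A.1 of \citet{RAB}, whose argument is exactly the Neyman--Pearson reduction you outline---recasting the shift $(\Omega_x,\Delta)$ as a change of the smoothing measure from $\mathbb{P}_0$ to $\mathbb{P}_1$, lower-bounding $q(y_A\,|\,x+\Omega_x,\mathcal{D}+\Delta)$ by the optimal type~II error at significance level $1-p_A$, upper-bounding every runner-up class by $1-\beta^{*}(p_B;\mathbb{P}_0,\mathbb{P}_1)$, and combining the two bounds under the stated condition. You also correctly diagnose that $p_Z$ in the displayed hypothesis is a typo for $p_A$, consistent with the original statement in \citet{RAB}.
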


Theorem~\ref{thm:rab} by~\citet{RAB}
establishes that the smoothed classifier $q$,
as defined in Definition~\ref{def:smoothed-classifier-rab},
trained on data containing backdoor patterns $\Delta$,
is robust around a test-time input $x$
with the perturbations in $\Omega_x$.
For the proof of this theorem,
we refer readers to Appendix A.1
of the original study by~\citet{RAB}.

\begin{corollary}[$\ell_2$-robustness achieved by Gaussian smoothing in~\citet{RAB}]
\label{coll:l2-rab}
    Let $\Delta = (\delta_1, \delta_2, ..., \delta_n$),
    $\Omega_x$ be $\mathbb{R}^d$-valued backdoor patterns,
    and $\mathcal{D}$ be a training dataset.
    Suppose that for each $i$,
    the smoothing noise on the training features is 
    $D^i \stackrel{iid}\sim \mathcal{N}(0, \sigma^2 \mathbbm{1}_d)$.
    Let $y_A \in C$ such that $y_A=q(x+\Omega_x, \mathcal{D}+\Delta)$
    with class probabilities satisfying:
    \begin{align*}
        q(y_A|x+\Omega_x, \mathcal{D}+\Delta) \geq p_A >
        p_B \geq \underset{y \neq y_A}{\max} q(y|x+\Omega_x, \mathcal{D}+\Delta)
    \end{align*}
    Then if the backdoor patterns on the training samples are bounded by
    \begin{align*}
        \sqrt{\sum_{i=1}^{n} ||\delta_i||_2^2} <
        \frac{\sigma}{2}\big(\Phi^{-1}(p_A) - \Phi^{-1}(p_B)\big),
    \end{align*}
    it is guaranteed $y_A = q(x+\Omega_x, \mathcal{D}) = q(x+\Omega_x, \mathcal{D}+\Delta)$.
\end{corollary}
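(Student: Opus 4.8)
The plan is to obtain this corollary as a direct specialization of Theorem~\ref{thm:rab} to the isotropic-Gaussian smoothing distribution, by evaluating the abstract optimal type~II error $\beta^*$ in closed form. Since all the class probabilities are taken at the fixed test point $x+\Omega_x$, the only shift that distinguishes the null from the alternative is the training perturbation $\Delta$; the test-side smoothing noise and the clean features enter both hypotheses identically and therefore act as nuisance factors. The strategy is thus to (i) reduce the hypothesis test to a Gaussian mean-shift problem, (ii) compute $\beta^*$ via Neyman--Pearson, and (iii) check that the theorem's feasibility condition collapses exactly to the stated $\ell_2$ bound.

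First I would collect the $n$ independent training-noise variables into a single vector $D=(\mathcal{D}^1,\ldots,\mathcal{D}^n)\in\mathbb{R}^{nd}$. Because each $\mathcal{D}^i\stackrel{iid}\sim\mathcal{N}(0,\sigma^2\mathbbm{1}_d)$, the concatenation satisfies $D\sim\mathcal{N}(0,\sigma^2\mathbbm{1}_{nd})$, and the stacked perturbation $\Delta=(\delta_1,\ldots,\delta_n)$ has Euclidean norm $\|\Delta\|_2=\sqrt{\sum_{i=1}^n\|\delta_i\|_2^2}$. After discarding the common test-side component, the test $H_0:Z\sim\mathbb{P}_0$ against $H_1:Z+(\Omega_x,\Delta)\sim\mathbb{P}_1$ reduces to distinguishing $\mathcal{N}(0,\sigma^2\mathbbm{1}_{nd})$ from its mean-shifted copy $\mathcal{N}(\Delta,\sigma^2\mathbbm{1}_{nd})$. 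The likelihood ratio depends on the observation only through its projection onto $\Delta/\|\Delta\|_2$, so the problem is effectively one-dimensional with signal-to-noise separation $\eta:=\|\Delta\|_2/\sigma$, yielding the closed form
\begin{align*}
    \beta^*(\alpha;\mathbb{P}_0,\mathbb{P}_1)=\Phi\big(\Phi^{-1}(1-\alpha)-\eta\big).
\end{align*}

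Next I would substitute this into the condition of Theorem~\ref{thm:rab}. Using $\Phi^{-1}(1-t)=-\Phi^{-1}(t)$,
\begin{align*}
    \beta^*(1-p_A)+\beta^*(p_B)
    =\Phi\big(\Phi^{-1}(p_A)-\eta\big)
    +\Big(1-\Phi\big(\Phi^{-1}(p_B)+\eta\big)\Big).
\end{align*}
By the strict monotonicity of $\Phi$, this exceeds $1$ exactly when $\Phi^{-1}(p_A)-\eta>\Phi^{-1}(p_B)+\eta$, i.e.\ when $\eta<\tfrac12\big(\Phi^{-1}(p_A)-\Phi^{-1}(p_B)\big)$, which is precisely the hypothesis $\sqrt{\sum_i\|\delta_i\|_2^2}<\tfrac{\sigma}{2}\big(\Phi^{-1}(p_A)-\Phi^{-1}(p_B)\big)$. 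Applying Theorem~\ref{thm:rab} with base training set $\mathcal{D}+\Delta$ and perturbation $-\Delta$ (which has the same norm, so the bound is unchanged) then gives $y_A=\argmax_y q(y\mid x+\Omega_x,\mathcal{D})$, while $y_A=q(x+\Omega_x,\mathcal{D}+\Delta)$ is immediate from the premise; together these establish the desired equality $q(x+\Omega_x,\mathcal{D})=y_A=q(x+\Omega_x,\mathcal{D}+\Delta)$.

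The main obstacle I anticipate is making the reduction to the one-dimensional projection rigorous: one must argue that the optimal likelihood-ratio test for the full product distribution over $(X,D)$ and the clean data depends on the observation only through the $\Delta$-direction of the training noise, so that $X$ and the clean features are genuine nuisance parameters contributing nothing to $\beta^*$. This rests on the independence built into Definition~\ref{def:smoothed-classifier-rab} together with the rotational invariance of the isotropic Gaussian. The remaining delicate point is purely bookkeeping—fixing the convention for the arguments of $\beta^*$ and the sign of the shift so that the \emph{strict} inequality in the bound maps to the \emph{strict} feasibility condition—but no nontrivial estimate is required beyond the Gaussian Neyman--Pearson computation.
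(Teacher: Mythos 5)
Your proposal is correct and takes essentially the same route as the paper, which presents this corollary as a direct specialization of Theorem~\ref{thm:rab} but defers the actual argument to \citet{RAB}'s appendix; the Gaussian Neyman--Pearson computation you carry out, giving $\beta^{*}(\alpha)=\Phi\big(\Phi^{-1}(1-\alpha)-\|\Delta\|_2/\sigma\big)$ and collapsing the type~II error condition to $\|\Delta\|_2<\tfrac{\sigma}{2}\big(\Phi^{-1}(p_A)-\Phi^{-1}(p_B)\big)$, is precisely that deferred argument (you also rightly read the paper's $p_Z$ as a typo for $p_A$). You additionally make explicit a detail the paper glosses over entirely: since the corollary's probability bounds are measured on the \emph{poisoned} data $\mathcal{D}+\Delta$ while the theorem's hypothesis is stated on the base set, one must invoke the theorem with base set $\mathcal{D}+\Delta$ and perturbation $-\Delta$ (legitimate since the bound depends only on $\|\Delta\|_2$), which is exactly how the two predictions $q(x+\Omega_x,\mathcal{D})$ and $q(x+\Omega_x,\mathcal{D}+\Delta)$ get tied together.
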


Now we can simplify Corollary~\ref{coll:l2-rab}
by assuming an adversary compromises $r \leq n$ training samples
with the largest noise pattern's $\ell_2$-norm bound, $||\delta||_2$:
\begin{align*}
    ||\delta||_2 < \frac{\sigma}{2\sqrt{r}}
        (\Phi^{-1}(p_A) - \Phi^{-1}(p_B)\big),
\end{align*}

A typical choice of $r$ in clean-label poisoning 
is $\sim$10\% of the training data.
But when computing our certificate bound in \S\ref{sec:certified-defense},
we consider the worst-case adversary 
capable of compromising the entire training set.
In this case, $r\!=\!n$, which becomes:
\begin{align*}
    ||\delta||_2 < \frac{\sigma}{2\sqrt{\textcolor{blue}{n}}}
        (\Phi^{-1}(p_A) - \Phi^{-1}(p_B)\big) 
                \leq \frac{\sigma}{2\sqrt{r}}
        (\Phi^{-1}(p_A) - \Phi^{-1}(p_B)\big),
\end{align*}

Under this formulation, 
we observe that against the strongest adversary 
who can compromise the entire training data $n$,
the robustness bound is much smaller compared to
that against the realistic attacker capable of 
compromising only 10\% of the training data.

\textbf{Certifying the robustness with multiple smoothed classifiers.}
In our certification process, a defender trains $n$ models%
\footnote{Note that this $n$ is different from $n$ we use for denoting the cardinality of the training set.}
that return the prediction of $x$, the target test-time sample, as $y_A$.
Then the lower bound of $p_A$ becomes $\alpha^{1/n}$,
with the probability of at least $1-\alpha$.
Plugging this into the above equation yields:
\begin{align*}
    ||\delta||_2 < \frac{\sigma}{\sqrt{n}} (\Phi^{-1}(p_A)\big)
\end{align*}

\begin{corollary}[Our robustness guarantee against clean-label poisoning.]
    Consider the smoothed classifier $q$ from Theorem~\ref{thm:rab},
    with no test-time input perturbation $X$ and 
    the smoothing mechanism being diffusion denoising,
    which applies a Gaussian process to add and remove noise from the input.
    Under this setting,
    Theorem~\ref{thm:rab} provides the same robustness guarantee 
    for clean-label poisoning attacks.
\end{corollary}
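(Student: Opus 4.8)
The plan is to specialize Theorem~\ref{thm:rab} to the degenerate case in which the test-time smoothing distribution $X$ is a point mass at the origin, so that $\Omega_x=0$ is forced, and then recognize diffusion denoising as an instance of the Gaussian smoothing mechanism already covered by Corollary~\ref{coll:l2-rab}. First I would set $X \equiv 0$ in Definition~\ref{def:smoothed-classifier-rab}. With no noise on the test input, the smoothed classifier collapses to $q(y|x,\mathcal{D}) = \mathbb{P}_{D}(h(x, \mathcal{D}+D)=y)$, i.e.\ randomness enters only through the training-data perturbations $D$. Because an adversary under our clean-label threat model does not touch the test point, the relevant perturbation vector is $(\Omega_x,\Delta)=(0,\Delta)$, and every hypothesis-testing quantity in Theorem~\ref{thm:rab} is evaluated against this restricted alternative.

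Next I would verify that the hypotheses of Theorem~\ref{thm:rab} remain meaningful after this substitution. The joint smoothing distribution becomes $Z=(0,D)$, the null is $Z\sim\mathbb{P}_0$ and the alternative is $Z+(0,\Delta)\sim\mathbb{P}_1$; the type~II error condition $\beta^{*}(1-p_Z) + \beta^{*}(p_B) > 1$ is inherited verbatim, with all test-input components trivialized. The conclusion $y_A = \argmax_y q(y|x+\Omega_x,\mathcal{D}+\Delta)$ then reads $y_A = \argmax_y q(y|x,\mathcal{D}+\Delta)$, which is exactly the statement that training on the poisoned data $\mathcal{D}+\Delta$ does not change the prediction at the clean test point $x$. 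This is precisely the clean-label poisoning guarantee we want.

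The remaining step is to instantiate the smoothing mechanism. I would invoke Corollary~\ref{coll:l2-rab}, which already gives the $\ell_2$ certificate $\sqrt{\sum_i \|\delta_i\|_2^2} < \tfrac{\sigma}{2}(\Phi^{-1}(p_A)-\Phi^{-1}(p_B))$ under $D^i \stackrel{iid}{\sim}\mathcal{N}(0,\sigma^2\mathbbm{1}_d)$, and observe that diffusion denoising realizes exactly this Gaussian perturbation: a single DDPM step adds $\sqrt{1-\alpha_t}\,\mathcal{N}(0,\mathbf{I})$ noise (scaled to variance $\sigma^2$ by the choice of time-step $t^{*}$ in \textsc{GetTimeStep}) and then maps back, so the effective training-data smoothing distribution $\mathbb{P}_D$ is Gaussian of the required form. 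Setting $\Omega_x=0$ in that corollary yields the identical radius with the same $p_A,p_B$, establishing that the guarantee carries over unchanged. The main obstacle I anticipate is justifying the last identification rigorously: the denoiser is a deterministic map composed with Gaussian noise, not literally additive Gaussian noise on the stored features, so one must argue that folding the denoiser into the base classifier $h$ preserves the measure-theoretic structure Corollary~\ref{coll:l2-rab} assumes — i.e.\ that $h':=h\circ(\text{denoise})$ is still an admissible base classifier and the certified radius is governed by the pre-denoising Gaussian variance $\sigma^2$. Once that reduction is accepted, the proof is a one-line specialization, matching the brevity of the author's argument.
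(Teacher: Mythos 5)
Your proof is correct and takes essentially the same route as the paper's: the paper's entire argument is the one-line specialization $X \equiv 0$ (and hence, under the clean-label threat model, $\Omega_x = 0$) of Theorem~\ref{thm:rab}, which is exactly the content of your first two paragraphs. Your closing paragraph---folding the deterministic denoiser (and the $\sqrt{\alpha_{t^*}}$ rescaling) into the base classifier $h$ so that diffusion denoising is an admissible instance of the Gaussian smoothing assumed in Corollary~\ref{coll:l2-rab}---addresses a point the paper leaves implicit in the corollary's statement rather than proving, so your write-up is, if anything, more complete than the original.
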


\begin{proof}
    By setting $X \equiv 0$ in Theorem~\ref{thm:rab} (and therefore $\Omega_x$),
    the smoothed classifier will output the same label prediction $y_A$
    within the same $\ell_2$-bound for the perturbations applied to the training data of size $n$.
\end{proof}

We also note that~\citet{RAB} formally demonstrate 
how to derive a robustness guarantee for $\ell_{\infty}$-norm perturbations
using uniform smoothing.
Because our work employs off-the-shelf diffusion denoising models
that rely on Gaussian random noise,
we do not provide a provable guarantee for $\ell_{\infty}$-based clean-label poisoning.
(but we do empirically show substantial effectiveness against $\ell_{\infty}$-based attacks!)
We believe that extending our work 
to provide formal guarantees for $\ell_{\infty}$-based attacks
would be straightforward and thus leave this for future work.

\section{More Experimental Results}
\label{appendix:additional-experiments}

Here we include additional experimental results.

\subsection{Using the Evaluation Setup by~\citet{PoisoningBenchmark}}
\label{subsec:what-if-in-their-setup}

We examine the weaker adversary whose perturbation is bounded to $||\delta||_{\infty} = 8$.
We also employ the exact setup as the prior work~\citep{PoisoningBenchmark},
where we craft 25 poisons on a ResNet18 pre-trained on CIFAR-100
and fine-tune the model on a subset of the CIFAR-10 training data. 
The subset contains the first 250 images per class (2.5k samples in total).

\begin{table}[ht]
\centering
\caption{\textbf{Diffusion denoising against clean-label poisoning (CIFAR10).} We denoise the $l_{\infty}$-norm of 8 perturbations added by five poisoning attacks by running a single-step stable diffusion on the entire training set. In each cell, we show the average attack success over 100 runs and the average accuracy of models trained on the denoised data in the parenthesis. Note that $^\dagger$ indicates the runs with $\sigma$=0.0, the same as our baseline that trains models without any denoising.}
\vspace{0.2em}
\adjustbox{max width=\linewidth}{
\begin{tabular}{@{}lcrrrrr@{}}
    \toprule
    \multicolumn{1}{c}{} &  & \multicolumn{5}{c}{\textbf{Our defense against $\ell_{\infty}$ attacks at $\sigma$ (\%)}} \\ \cmidrule(l){3-7} 
    \textbf{Poisoning attacks} & \textbf{Scenario} & $^\dagger$\textbf{0.0} & \textbf{0.5} & \textbf{1.0} & \textbf{1.5} & \textbf{2.0} \\ \midrule \midrule
    Poison Frog!~\citep{Frogs} & \multirow{5}{*}{\rotatebox[origin=c]{90}{White-box}} & $^{(69.8)}$13.0 & $^{(55.9)}$8.0 & $^{(43.1)}$4.0 & $^{(34.7)}$9.0 & $^{(26.9)}$11.0 \\
    Convex Polytope~\citep{CP} &  & $^{(69.8)}$24.0 & $^{(53.8)}$5.0 & $^{(42.6)}$9.0 & $^{(35.0)}$8.0 & $^{(27.7)}$6.0 \\
    Bullseye Polytope~\citep{BP} &  & $^{(69.4)}$100.0 & $^{(55.8)}$10.0 & $^{(42.8)}$8.0 & $^{(35.0)}$9.0 & $^{(27.4)}$9.0 \\
    Label-consistent Backdoor~\citep{LC} &  & $^{(69.8)}$2.0 & $^{(55.9)}$3.0 & $^{(42.7)}$5.0 & $^{(34.9)}$8.0 & $^{(27.4)}$12.0 \\      %
    Hidden Trigger Backdoor~\citep{HT} &  & $^{(69.8)}$5.0 & $^{(55.9)}$3.0 & $^{(42.7)}$10.0 & $^{(35.2)}$9.0 & $^{(27.3)}$9.0 \\ \midrule        %
    Poison Frog!~\citep{Frogs} & \multirow{5}{*}{\rotatebox[origin=c]{90}{Black-box}} & $^{(67.9)}$7.0 & $^{(53.8)}$6.0 & $^{(43.3)}$2.5 & $^{(35.3)}$8.0 & $^{(28.8)}$6.5 \\
    Convex Polytope~\citep{CP} &  & $^{(67.9)}$4.0 & $^{(53.7)}$3.0 & $^{(43.3)}$3.5 & $^{(35.2)}$3.0 & $^{(28.8)}$8.0 \\
    Bullseye Polytope~\citep{BP} &  & $^{(67.7)}$8.0 & $^{(53.8)}$17.5 & $^{(43.2)}$16.5 & $^{(35.3)}$7.5 & $^{(28.6)}$8.5 \\
    Label-consistent Backdoor~\citep{LC} &  & $^{(67.9)}$3.5 & $^{(53.8)}$2.0 & $^{(43.5)}$4.5 & $^{(35.2)}$2.5 & $^{(29.0)}$8.0 \\      %
    Hidden Trigger Backdoor~\citep{HT} &  & $^{(67.9)}$7.5 & $^{(53.7)}$2.0 & $^{(43.3)}$7.0 & $^{(35.2)}$10.5 & $^{(28.7)}$8.5 \\ \bottomrule       %
\end{tabular}
}
\label{tbl:cifar10-denoise-linf8}
\end{table}

Table~\ref{tbl:cifar10-denoise-linf8} summarizes our results against clean-label poisoning attacks with $l_{\infty}$-norm of 8. We examine five poisoning attacks in the white-box and black-box scenarios. We focus on the attacks against transfer-learning as the specific data splits the prior work~\citep{PoisoningBenchmark} uses are only compatible with them. Poisoning attacks against training from scratch scenarios use the entire CIFAR10 as ours, so we don't need to duplicate the experiments. In the white-box setting, we fine-tune the CIFAR-100 ResNet18 model (that we use for crafting poisons) on the poisoned training set. In the black-box setting, we fine-tune different models (VGG16 and MobileNetV2 models pre-trained on CIFAR-100) on the same poisoned training set. We use $\sigma$ in $\{0.5, 1.0, 1.5, 2.0\}$ for our single-step DDPM. $\sigma$=0.0 is the same as no defense.

\textbf{Results.}
Diffusion denoising significantly reduces the poisoning attack success. The most successful attack, Bullseye Polytope in the white-box setting, achieves the attack success of 100\% in $l_{\infty}$-norm of 8 pixels, but denoising with $\sigma$ of 0.5 can reduce their success to 10\%. Our defense reduces the attack success of Poison Frog! and Convex Polytope from 13-34\% to 2-8\% at $\sigma= 0.5$. The two backdoor attacks (label-consistent and hidden trigger) exploiting clean-label poisoning are not successful in the benchmark setup (their success rate ranges from 2-7.5\%). We thus could not quantify our defense's effectiveness against these backdoor attacks. Note that \citet{PoisoningBenchmark} also showed these backdoor attacks ineffective, and our finding is consistent with their results. We exclude them from our Tiny ImageNet results in \S\ref{subsec:tiny-imagenet}.

We also observe that the increased $\sigma$ (strong denoising) can significantly reduce the utility of a model trained on the denoised training data. Note that since the setup uses only 2.5k CIFAR10 samples for fine-tuning, and in consequence, the model's utility is already $\sim$70\% at most, much lower than our setup. As we increase the $\sigma$ from 0.5 to 2.0, the fine-tuned model's accuracy leads to 56\% to 27\%. However, we show that with the small $\sigma$, our diffusion denoising can reduce the attack success significantly. We also show in our evaluation (\S\ref{sec:evaluation}) that we achieve a high model's utility while keeping the same $\sigma$. We attribute the increased utility to recent model architectures, such as VisionTransformers, or to pre-training a model on a larger data corpus. We leave further investigation for future work.

Moreover, in a few cases, the poisoning success increases from 2--7\% to 10--13\% as we increase $\sigma$. We attribute this increase not to the attack being successful with a high $\sigma$ but to the poor performance of a model. For example, the accuracy of a model with $\sigma=2.0$ is $\sim$27\%, meaning that four out of five targets in a class can be misclassified.

\subsection{Tiny ImageNet Results}
\label{subsec:tiny-imagenet}

\begin{table}[ht]
\centering
\caption{\textbf{Diffusion denoising against clean-label poisoning (Tiny ImageNet).} We consider four attacks with the $\ell_{\infty}$-norm of 16 perturbation bound. In each cell, the attack success and accuracy of models trained on the denoised data in the parenthesis on average over 100 runs. Note that $^\dagger$ indicates the runs with $\sigma$=0.0, representing no-defense scenario.}
\vspace{0.2em}
\adjustbox{max width=\linewidth}{
\begin{tabular}{@{}lcrrrrr@{}}
    \toprule
    \multicolumn{1}{c}{} &  & \multicolumn{5}{c}{\textbf{Our defense against $\ell_{\infty}$ attacks at $\sigma$ (\%)}} \\ \cmidrule(l){3-7} 
    \textbf{Poisoning attacks} & \textbf{Scenario} & $^\dagger$\textbf{0.0} & \textbf{0.1} & \textbf{0.25} & \textbf{0.5} & \textbf{1.0} \\ \midrule \midrule
    Poison Frog!~\citep{Frogs} & \multirow{3}{*}{\rotatebox[origin=c]{90}{WB}} 
    & $^{(58.9)}$79.0 & $^{(56.3)}$3.0 & $^{(53.3)}$2.0 & $^{(46.5)}$2.0 & $^{(32.0)}$1.0 \\
    Convex Polytope~\citep{CP} &  & $^{(58.9)}$95.0 & $^{(56.3)}$14.6 & %
    - & %
    - & %
    - \\
    Bullseye Polytope~\citep{BP} &  & $^{(58.8)}$100.0 & $^{(57.4)}$100.0 & %
    - & $^{(48.7)}$0.0 & %
    - \\ \midrule %
    Poison Frog!~\citep{Frogs} & \multirow{3}{*}{\rotatebox[origin=c]{90}{BB}} 
    & $^{(58.0)}$5.0 & $^{(51.9)}$0.0 & $^{(46.5)}$0.0 & $^{(36.0)}$1.0 & $^{(22.6)}$1.0 \\
    Convex Polytope~\citep{CP} &  & $^{(58.1)}$0.0 & $^{(51.9)}$0.0 & %
    - & %
    - & %
    - \\
    Bullseye Polytope~\citep{BP} &  & $^{(57.9)}$9.0 & $^{(59.1)}$33.3 & %
    - & $^{(49.8)}$16.7 & %
    - \\ \bottomrule %
\end{tabular}
}
\label{tbl:tiny-denoise-linf16-all}
\end{table}

We also ran our experiments with Tiny ImageNet to examine whether our findings are consistent across different datasets. We assume the same adversary who can add perturbations bounded to $\ell_{\infty}$-norm of 16 pixels. For the three attacks in transfer-learning scenarios, we craft 250 poisoning samples on a VGG16 pret-trained on the same dataset and fine-tune the model on the tampered training set. %
We do not evaluate backdoor attacks because they are either ineffective (label-consistent and hidden trigger) or the original study did not employ Tiny ImageNet (Sleeper Agent).

\textbf{Results.}
Table~\ref{tbl:tiny-denoise-linf16-all} shows our results. Our results are consistent with what we observe in CIFAR10. Note that the poisoning attacks are more successful against Tiny ImageNet, and we attribute the success to the learning complexity of Tiny ImageNet over CIFAR10: Tiny ImageNet has 200 classes and 100k training samples. Prior work~\citep{PoisoningBenchmark} had the same observation. Our defense significantly reduces the attack success. Bullseye Polytope in the white-box setting, achieves the attack success of 100\% without our defense, but denoising with $\sigma$ of 0.25 reduces their success to 10--15\%. We find the defense reduces the success of Poison Frog! and Convex Polytope from 79--95\% to 3--15\% at $\sigma= 0.1$. %
We also observe that the increase in $\sigma$ significantly degrades the model utility., e.g., $\sigma=$1.0 loses the accuracy by 27\%.

\section{(Denoised) Poisoning samples}
\label{appendix:poisoning-samples}

\begin{table}[ht]
\centering
\caption{\textbf{Visualize poisoning samples.} We, for the CIFAR10 training data, display the poisoning samples crafted by different clean-label poisoning attacks ($\ell_{\infty}$-norm of 16). We also show how the perturbations are denoised with difference $\sigma$ values in $\{0.1, 0.25, 0.5, 1.0\}$. $\sigma=$ 0.1 yields to ineffective poisons. $\sigma=$ 0.0 means we do not denoise the poisons.}
\adjustbox{max width=1.0\linewidth}{
\begin{tabular}{@{}cccccccl@{}}
    \toprule
    \textbf{FC} & \textbf{CP} & \textbf{BP} & \textbf{WB} & \textbf{HTBD} & \textbf{CLBD} & \textbf{SA} & \multicolumn{1}{c}{$\mathbf{\sigma}$} \\ \midrule \midrule
    \includegraphics[scale=2.]{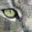} 
        & \includegraphics[scale=2.]{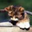} 
        & \includegraphics[scale=2.]{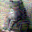} 
        & \includegraphics[scale=2.]{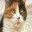}
        & \includegraphics[scale=2.]{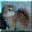}
        & \includegraphics[scale=2.]{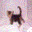}
        & \includegraphics[scale=2.]{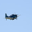} 
        & \textbf{0.0} \\
    \includegraphics[scale=2.]{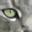} 
        & \includegraphics[scale=2.]{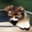} 
        & \includegraphics[scale=2.]{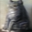} 
        & \includegraphics[scale=2.]{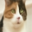}
        & \includegraphics[scale=2.]{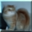}
        & \includegraphics[scale=2.]{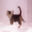}
        & \includegraphics[scale=2.]{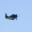} 
        & \textbf{0.1} \\
    \includegraphics[scale=2.]{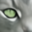} 
        & \includegraphics[scale=2.]{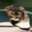} 
        & \includegraphics[scale=2.]{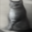} 
        & \includegraphics[scale=2.]{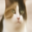}
        & \includegraphics[scale=2.]{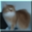}
        & \includegraphics[scale=2.]{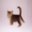}
        & \includegraphics[scale=2.]{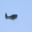} 
        & \textbf{0.25} \\
    \includegraphics[scale=2.]{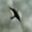} 
        & \includegraphics[scale=2.]{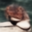} 
        & \includegraphics[scale=2.]{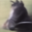} 
        & \includegraphics[scale=2.]{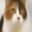}
        & \includegraphics[scale=2.]{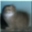} 
        & \includegraphics[scale=2.]{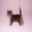} 
        & \includegraphics[scale=2.]{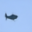} 
        & \textbf{0.5} \\
    \includegraphics[scale=2.]{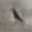} 
        & \includegraphics[scale=2.]{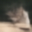} 
        & \includegraphics[scale=2.]{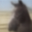} 
        & \includegraphics[scale=2.]{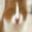} 
        & \includegraphics[scale=2.]{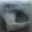} 
        & \includegraphics[scale=2.]{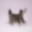} 
        & \includegraphics[scale=2.]{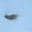} 
        & \textbf{1.0} \\ \bottomrule
\end{tabular}
}
\end{table}

\end{document}